\let\c@corollary\c@theorem
\let\p@corollary\p@theorem
\let\c@lemma\c@theorem
\let\p@lemma\p@theorem
\newcommand{\eps}{\varepsilon}
\newcommand{\snotein}[1]
\newtheorem{fact}[theorem]{Fact}
\begin{document}

\title{Efficiently Coloring the Intersection of  a General Matroid and Combinatorial Matroids}
\author{Stephen Arndt \inst{1}\orcidID{0009-0008-2847-0721} \and Benjamin Moseley \inst{1}\orcidID{0000-0001-8162-017X} \and Kirk Pruhs \inst{2}\orcidID{0000-0001-5680-1753} \and Michael Zlatin \inst{3}\orcidID{0000-0003-1773-1152}}

\institute{Carnegie Mellon University, Pittsburgh, PA, USA \email{\{sarndt,moseleyb\}@andrew.cmu.edu} \and University of Pittsburgh, Pittsburgh, PA, USA \email{kirk@cs.pitt.edu} \and Pomona College, Claremont, CA, USA \email{michael.zlatin@pomona.edu}}

\maketitle
\begin{abstract}
    This paper shows a polynomial-time algorithm that, given
  a general matroid $M_1$ 
   and $k-1$ partition matroids $ M_2, \ldots, M_k$,
  produces a coloring of the intersection  $M = \cap_{i=1}^k M_i$ 
  using at most $1+\sum_{i=1}^k \left(\chi(M_i) -1\right)$ colors.  This is the first polynomial-time $O(k)$-approximation algorithm 
  for  matroid intersection coloring where one of the matroids may be a general matroid.  Leveraging the fact that most of the standard combinatorial matroids reduce to partition matroids 
  at a loss of a factor of two in the chromatic number, this algorithm also yields a polynomial-time $O(k)$-approximation algorithm for matroid intersection coloring in the case where each of the matroids  $ M_2, \ldots, M_k$ are one of these standard combinatorial types. Even when $k = 2$, the previous best-known approximation ratio was $O(\log n)$ via a reduction to Set Cover.
\end{abstract}

\section{Introduction}
Some of the most generally applicable tools in an algorithmist's toolkit are efficient algorithms for various standard matroid-related optimization problems.
In particular, a reasonable number of natural optimization problems can be viewed
as set cover problems, in which the constraints can be partitioned into parts
that induce a matroid~\cite{part_decomp_1,schrijver_book}. 
Such problems can be viewed as matroid intersection coloring problems, defined 
as follows:

\begin{itemize}
    \item The
input is a collection of matroids $M_1 = (X, \mathcal{I}_1), \ldots,  M_k = (X, \mathcal{I}_k)$ on common ground set $X$.
The intersection $M = \cap_{i=1}^k M_i$ is defined
to be the set system $\left(X , \cap_{i=1}^k \mathcal{I}_i\right)$. 
\item A feasible solution is a coloring of the elements of  $X$ such that, for every color, the set of elements assigned that color is independent in 
$M$, i.e. independent in each matroid $M_i$ for $i \in [k]$. 
\item The objective is to \emph{minimize the number of colors} used. For any set system $M$, we use $\chi(M)$ to denote
the chromatic number of $M$, that is the minimum number of colors necessary for the existence
of a feasible coloring. 
\end{itemize}

There is a significant literature within the combinatorics/mathematics community
that establishes existential results related in some way
to matroid intersection coloring (and/or packing)~\cite{AharoniBergerGuoKotlar2025,BergerGuo2025,ab06,part_decomp_2,guo_24,galvin,lc_two_matroids,sbro}. 
However, most of the proofs of these results are nonconstructive, in that they are not readily 
adaptable to yield polynomial-time coloring algorithms. 
In particular, many of these results are established with topological fixed-point arguments that employ Sperner's lemma \cite{AharoniBergerGuoKotlar2025,BergerGuo2025,ab06}.
We now highlight some of these existential results that
are most relevant for this paper.

\begin{theorem} \cite{AharoniBergerGuoKotlar2025}
\label{thm:nonconstructive1}
For  $k$ general matroids $M_1, \ldots, M_k$,  it is the case that
 $\chi\left(\cap_{i=1}^k M_i\right) \le (2k-1) \max_{i=1}^k \chi(M_i)$.
\end{theorem}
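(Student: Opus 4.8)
The plan is to first strip away trivialities and then run a topological fixed-point argument of Sperner type, which is the route taken by all the cited existential results of this flavor. If some $M_i$ has a loop then $\chi(M_i)=\infty$ and the inequality is vacuous, so I would assume every $M_i$ is loopless. Writing $c:=\max_{i=1}^k\chi(M_i)$, the goal becomes to partition the ground set $X$ into at most $(2k-1)c$ sets, each of them independent in $M_1,\dots,M_k$ simultaneously. The first substantive step is to convert the chromatic hypotheses into a geometric form via the matroid union theorem of Nash--Williams: $\chi(M_i)\le c$ is equivalent to $|A|\le c\,r_i(A)$ for every $A\subseteq X$, where $r_i$ is the rank function of $M_i$. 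Said topologically, for every $A$ the independence complex $\mathcal{I}(M_i|_A)$ is highly connected --- its topological connectivity is at least $r_i(A)\ge|A|/c$ --- and this is precisely the kind of quantitative input a colorful Sperner/KKM argument needs.

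Next I would phrase the coloring as a fixed-point problem. One template: to certify that $X$ has a $p$-coloring with every color class common-independent, build a suitably labeled triangulation of a simplex (or of a product of simplices, one factor per matroid) whose vertex labels record how prospective color classes sit inside the various $M_i$; a Sperner-type lemma then forces a fully labeled cell, which decodes into a valid $p$-coloring provided $p$ exceeds the relevant topological threshold. The assertion of the theorem is that this threshold is $(2k-1)c$, and I would expect the ``$2k-1$'' to appear as a dimension count in which each of the $k$ matroids contributes a term of size about $2c$ while one can be absorbed using only $c$, for a total of $2kc-c$. An alternative organization, which may make the bookkeeping cleaner, is induction on $k$: the crux is then a lemma saying that if $N$ is an intersection of matroids with $\chi(N)=t$ and $M$ is one additional matroid, then $\chi(N\cap M)\le t+2\chi(M)$; iterating this, peeling off one matroid at a time starting from $\chi(M_1)\le c$, gives $\chi(\bigcap_{i=1}^kM_i)\le c+2(k-1)c=(2k-1)c$. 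I would prove that lemma by the same mechanism, now applied only to the single complex $\mathcal{I}(M)$ together with the partition of $X$ into the color classes of an optimal coloring of $N$: round by round extract a set independent in $M$ and in $N$ that spreads across those classes, and argue that $2\chi(M)$ rounds clear everything the old coloring left unsafe for $M$.

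The step I expect to be the main obstacle is controlling topological connectivity as matroids accumulate in the intersection. Matroid independence complexes are wonderfully connected, but the intersection of two of them is not, and the naive submodular-type estimate --- roughly, that $\eta(\mathcal{C}\cap\mathcal{D})$ is at least $\eta(\mathcal{C})+\eta(\mathcal{D})-|X|$ --- decays far too fast to survive $k$ intersections. So the argument has to be arranged so that a connectivity bound is ever invoked only for one matroid complex at a time, restricted to carefully chosen subsets, with independence in the already-processed matroids enforced by combinatorial bookkeeping of the block/partition structure rather than topologically. Getting the accounting to come out to exactly $2\chi(M)$ extra colors per matroid --- rather than a multiplicative blow-up to $\chi(M)$ times the number of blocks, which is what a careless reduction produces --- is the delicate heart of the proof; the loopless reduction, the Nash--Williams translation, and decoding a Sperner cell into an actual coloring should all be routine.
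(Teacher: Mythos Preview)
This theorem is not proved in the paper; it is quoted from \cite{abgk24_iom} as background, and the only information the paper offers about its proof is that it ``hinge[s] on topological fixed-point arguments, deriving from an influential earlier paper by Aharoni and Berger~\cite{ab06}.'' There is therefore no in-paper proof to compare your proposal against. Your general plan --- a Sperner/KKM-type argument driven by the high connectivity of matroid independence complexes, with the Nash--Williams formula translating the chromatic hypothesis into a rank inequality --- is consistent with what the paper reports about the original argument.

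That said, your inductive organization has a real gap that you yourself flag but do not close. The lemma $\chi(N\cap M)\le\chi(N)+2\chi(M)$, with $N$ an intersection of matroids, cannot be obtained by applying two-matroid machinery to the pair $(N,M)$: $N$ is not a matroid, and its independence complex carries no usable connectivity bound. Your proposed workaround --- retain only the partition of $X$ into the $t=\chi(N)$ color classes of an optimal $N$-coloring and run the topological argument against the single matroid $M$ --- faces the difficulty that every new color class must still be independent in each $M_i$ for $i<k$, not merely lie inside one old class. Insisting on the latter is refinement and costs $t\cdot\chi(M)$ colors, not $t+2\chi(M)$; allowing the former means you must certify independence in $k-1$ matroids that you have deliberately stopped tracking topologically. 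Your sketch asserts that ``combinatorial bookkeeping of the block/partition structure'' handles this, but gives no mechanism, and that mechanism is exactly where the content of such a proof lives.
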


For the intersection of two general matroids ($k=2$), this can be improved to the following.

\begin{theorem} \cite{BergerGuo2025}
\label{thm:nonconstructive2}
For  two general matroids $M_1$ and $ M_2$ it is the case that
 $\chi(M_1 \cap M_2) \le \chi(M_1) + \chi(M_2)$.
\end{theorem}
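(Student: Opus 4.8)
First I would recast the claim: $X$ should admit a partition into $\chi(M_1)+\chi(M_2)$ classes, each independent in both matroids. Write $p=\chi(M_1)$ and $q=\chi(M_2)$; if either matroid has a loop then $\chi=\infty$ and the inequality is trivial, so assume both loopless. By the matroid covering theorem of Edmonds and Nash-Williams one may fix a partition $X=A_1\cup\cdots\cup A_p$ into $M_1$-independent sets and a partition $X=B_1\cup\cdots\cup B_q$ into $M_2$-independent sets; equivalently, $X$ is independent in the $p$-fold matroid union $M_1^{\vee p}$ and in the $q$-fold union $M_2^{\vee q}$.

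The approach I would try first is induction on $p+q$, supported by the following key lemma: \emph{there is a common independent set $I$ of $M_1$ and $M_2$ whose deletion strictly drops the chromatic number of one of the two matroids}, i.e.\ $\chi(M_1\setminus I)\le p-1$ or $\chi(M_2\setminus I)\le q-1$. Granting this, I set $I$ aside as one color class, recurse on $X\setminus I$ with the restricted matroids (whose chromatic numbers now sum to at most $p+q-1$) to get at most $p+q-1$ classes there, and conclude with at most $p+q$ in total; the base case $p=q=1$ is immediate since then $X$ itself is a common independent set. Via the covering theorem, the key lemma is equivalent to finding a common independent set $I$ such that $X\setminus I$ is independent in $M_1^{\vee(p-1)}$ or in $M_2^{\vee(q-1)}$; equivalently still, one seeks a $\chi(M_i)$-coloring of some $M_i$ one of whose color classes happens to be independent in the other matroid as well.

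Proving the key lemma in full generality is the crux, and I expect it to require a topological fixed-point argument in the style of the Sperner-lemma proofs cited above rather than a purely combinatorial exchange argument: when the two matroids are structurally very different, a greedy or local search for the right color class need not terminate, whereas a fixed-point theorem supplies such a class (nonconstructively). Concretely I would encode a candidate color class, or an entire candidate coloring, as a point of a simplex or a product of simplices, pass to a sufficiently fine simplicial subdivision, label its vertices by colors (and elements) using the exchange axioms of $M_1$ and $M_2$ to decide each local move, apply Sperner's lemma to extract a fully-labeled cell, and read off from that cell the desired common independent set, with the parameters $p$ and $q$ entering through the conditions that make the labeling well-defined. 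The main obstacle is designing the labeling so that it is a legitimate Sperner labeling and so that a fully-labeled cell decodes to a set independent in $M_1$ \emph{and} in $M_2$ simultaneously --- both matroids must be controlled within one argument --- after which the induction above assembles everything into the bound $\chi(M_1\cap M_2)\le\chi(M_1)+\chi(M_2)$.
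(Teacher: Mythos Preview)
The paper does not contain a proof of this theorem: it is quoted from \cite{bg24} as a known existential result, and the surrounding text only remarks that its proof ``hinges on topological fixed-point arguments, deriving from an influential earlier paper by Aharoni and Berger \cite{ab06}.'' There is thus no proof in the present paper against which to compare your proposal.

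As for the proposal itself, it is an outline rather than a proof, and you are candid about this. The inductive shell is sound: if your key lemma held for every pair of loopless matroids, then peeling off one common independent set and recursing would give the bound. But the key lemma is where all the content lives, and you do not prove it; you say you ``expect it to require a topological fixed-point argument'' and gesture at a Sperner-type setup without specifying the simplicial complex, the labeling rule, or how a fully-labeled cell decodes into the desired common independent set. That is the genuine gap. Moreover, it is not clear that this particular reduction is the one the topological machinery in \cite{bg24} actually supports: your lemma asks for a common independent set whose deletion drops \emph{one individual} chromatic number by one, a rather specific structural claim that is not obviously equivalent to (and may be stronger, step by step, than) the theorem itself. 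So while the spirit of your sketch --- reduce to a single hard step and attack it with a fixed-point argument --- matches what the paper reports about the cited proof, the concrete lemma you isolate and the schematic Sperner plan around it do not yet constitute an argument one could verify.
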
 

The proofs of Theorems \ref{thm:nonconstructive1} and \ref{thm:nonconstructive2} both hinge on topological fixed-point arguments, deriving from an influential earlier paper 
by Aharoni and Berger \cite{ab06} that shows that $\chi(M_1 \cap M_2) \leq 2\max\{\chi(M_1), \chi(M_2)\}$.
These theorems show that in some sense 
the chromatic number of the intersection of $k$ matroids grows at most linearly with $k$. 
Further, there is a lower bound that rules out  the possibility of many natural types of sublinear dependence.

\begin{theorem} \cite{AharoniBergerGuoKotlar2025}
    \label{thm:oldpartitionlowerbound}
    There are infinitely many natural numbers $k$, such that there are $k+1$ partition
matroids $M_1, \ldots, M_{k+1}$, each with chromatic number $k$, and where the intersection of these
matroids has chromatic number  $k^2$. 
\end{theorem}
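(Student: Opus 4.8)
The plan is to prove the statement by an explicit construction: for every prime power $k$ I will exhibit $k+1$ partition matroids on a common $k^2$-element ground set, each with chromatic number $k$, whose intersection has chromatic number $k^2$. Since the prime powers form an infinite set, this produces infinitely many valid values of $k$.

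I would begin with two elementary observations about partition matroids. Let $N$ be a partition matroid whose independent sets are exactly the subsets of the ground set containing at most one element from each part of a partition $\mathcal{B}$. First, $\chi(N) = \max_{B \in \mathcal{B}} |B|$: every color class meets each part at most once, so at least $\max_B |B|$ colors are needed, while coloring the elements inside each part with distinct colors from a palette of that size yields a feasible coloring. Second, in an intersection $M = \cap_{i=1}^{k+1} M_i$ of partition matroids, a pair $\{x,y\}$ of distinct elements is independent in $M$ if and only if $x$ and $y$ lie in different parts of $M_i$ for every $i$; hence, if for \emph{every} pair $\{x,y\}$ there is some $i$ placing $x$ and $y$ in a common part of $M_i$, then every independent set of $M$ is a singleton and $\chi(M) = |X|$.

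It therefore remains to build $k+1$ partitions of a set $X$ of size $k^2$, each into $k$ parts of size exactly $k$, such that every pair of points of $X$ lies together in some part of some partition. This is exactly the parallel-class structure of an affine plane of order $k$: take $X$ to be the $k^2$ points of $\mathrm{AG}(2,k)$, and for $i \in [k+1]$ let $M_i$ be the partition matroid whose parts are the $k$ lines of the $i$-th parallel class. Each part is a line on exactly $k$ points, so by the first observation $\chi(M_i) = k$ exactly; and since any two distinct points of an affine plane lie on a unique common line, which belongs to exactly one parallel class, the hypothesis of the second observation holds, so $\chi(M) = |X| = k^2$, as required.

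The only substantive ingredient, and the step I would regard as the crux, is the existence of the required system of partitions, i.e.\ of an affine plane (equivalently, a complete set of $k-1$ mutually orthogonal Latin squares) of order $k$. This is classical for $k$ a prime power: one may take $X = \mathbb{F}_k^2$ and let the parallel classes be the families of lines of each of the $k+1$ directions in $\mathbb{P}^1(\mathbb{F}_k)$. The pair-covering property can be checked directly by counting: the $k+1$ parallel classes supply $(k+1)\,k \binom{k}{2}$ point-pairs, which already equals $\binom{k^2}{2}$, and in such a configuration no pair lies in two parts, so every pair is covered exactly once.
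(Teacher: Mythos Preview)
Your construction via affine planes of prime-power order is correct and is the standard one for this lower bound. Note, however, that the present paper does not actually prove Theorem~\ref{thm:oldpartitionlowerbound}: it is quoted from \cite{abgk24_iom} without proof, so there is no ``paper's own proof'' to compare against. Your argument is a complete and self-contained proof of the cited result, and the affine-plane construction you use is essentially the same device employed in the original source.
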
 

However, the algorithmic question of producing such colorings for matroid intersection
lags well behind what is known existentially. We now summarize the
literature on polynomial-time algorithms for matroid intersection coloring. Note that for the duration of the paper, we allow partition matroids to have arbitrary capacity constraints on each part (sometimes called ``generalized partition matroids'' in the literature).

\begin{compactitem}
\item \textbf{A single matroid.} Edmonds’ algorithm optimally colors the elements in polynomial time~\cite{edmonds1968matroid,schrijver_book}. 
        \item \textbf{Two partition matroids.}  
      This case reduces to edge–coloring a bipartite graph and is solvable in polynomial time~\cite{schrijver_book}.
    \item \textbf{Two strongly-base–orderable  matroids.}  
      Davies and McDiarmid give a polynomial–time optimal algorithm for any pair of strongly-base–orderable matroids assuming oracle access to the bijective map~\cite{davies_and_mcd}.  
      Partition, transversal, and gammoid matroids are all strongly-base–orderable; but most notably, graphic matroids are not strongly-base–orderable~\cite{brualdi,schrijver_book}.
    \item \textbf{$k$ partition matroids.}  
      A simple greedy procedure colors the intersection using $1+\sum_{i=1}^{k}\bigl(\chi(M_i)-1\bigr)$  colors.
      
    \item \textbf{$k$ combinatorial matroids.}
     If each of the matroids are one of the standard combinatorial types (graphic, laminar, transversal, partition, or a gammoid), then
there is a polynomial-time 
algorithm that uses $1+\sum_{i=1}^k \left( 2\chi(M_i) -1 \right)$ colors \cite{part_decomp_1,part_decomp_2,part_decomp_gammoid},
as all of these types of combinatorial matroids
can be efficiently reduced to a partition matroid at a cost of increasing
the chromatic number by at most a factor of two \cite{part_decomp_1,part_decomp_2,part_decomp_gammoid}.
In contrast it is known that no such reduction is possible for some binary matroids~\cite{imposs_1,imposs_2}.
 \item \textbf{Instance optimality hardness.} It is known that there is no polynomial-time algorithm to optimally color the intersection of two general matroids under the independence oracle model \cite{BercziSchwarcz2021}. Further, it is NP-hard to optimally color the intersection 
 of a graphic matroid and a partition matroid~\cite{gmpm_hard,horsch2024rainbow}, or to optimally color the intersection of three partition matroids~\cite{obszarski}, when these matroids are given by an explicit representation.
\end{compactitem}

\paragraph{Bridging the constructive gap.}
There are no polynomial-time approximation algorithms known for matroid intersection coloring  if even one of the matroids is not one of the standard combinatorial types (except via a trivial reduction to Set Cover). Our goal is to match the existential results using constructive proofs, which also yield polynomial-time algorithms to find the desired coloring. Often, exactly matching nonconstructive existential results with constructive existential results is a daunting task.  Due to this, it is common to either settle for a slightly weaker result, or require slightly stronger preconditions (c.f.~\cite{Annamalai}).

In this paper, we consider matroid intersection coloring problems where all but one of the matroids is a partition matroid (or alternatively, by applying the known reductions to partition matroids, if all but one of the matroids are one of the standard combinatorial types).  Besides being a logical next step towards general matroid intersection coloring,  our motivation to consider such instances arises from the observation that some well-known conjectures in the combinatorics literature can be stated as conjectures about coloring the intersection of a general matroid and a collection of partition matroids. 
In particular, these matroid intersection coloring problems are related to Rota's Basis Conjecture~\cite{rota_basis} and  to the Strong Coloring Conjecture~\cite{scc}.
There  is significant literature around these well-known conjectures that contain  existential results with nonconstructive proofs. This gives   targets to shoot for when developing constructive polynomial-time algorithms. 

Further, computing the maximum number of disjoint common bases between two general matroids is reducible to the special case where one matroid is general and the other is partition \cite{HarveyKiralylau2011}. Thus it seems possible that such a reduction may also exist for minimum coloring, and coloring the intersection of one general matroid and one partition matroid may capture the complexity of coloring the intersection of two general matroids.

\subsection{Our Results}\label{subsec:our_results}

We work in the standard model where the matroids are accessed via a polynomial-time independence oracle. The main result of this paper is:

\begin{theorem}\label{thm:main}
There is a polynomial-time algorithm that, given a general matroid $M_1 = (X, \mathcal{I}_1)$ 
   and $k-1$ partition matroids $ M_2, \ldots, M_k$,
   will produce a coloring of the intersection  $M = \cap_{i=1}^k M_i$ 
  using at most $1+\sum_{i=1}^k \left(\chi(M_i) -1\right)$ colors.
\end{theorem}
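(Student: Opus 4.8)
The plan is to build the coloring incrementally, one partition matroid at a time, maintaining at each stage a valid coloring of the intersection of $M_1$ with the partition matroids processed so far. We start with an optimal coloring of $M_1$ alone using $\chi(M_1)$ colors, obtained via Edmonds' algorithm. Inductively, suppose we have a coloring of $\cap_{i=1}^{j} M_i$ using $1+\sum_{i=1}^{j}(\chi(M_i)-1)$ colors; we want to fold in the partition matroid $M_{j+1}$, whose chromatic number is $\chi(M_{j+1})$, increasing the palette by $\chi(M_{j+1})-1$. So the heart of the argument is a single step: given a general matroid $N$ (playing the role of $\cap_{i=1}^j M_i$, which is itself a matroid) together with an optimal proper coloring of $N$ using $t$ colors, and a partition matroid $P$ with $\chi(P)=p$, produce in polynomial time a coloring that is simultaneously proper for $N$ and for $P$, using at most $t+p-1$ colors.

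The key step is to recast this merging step as an instance of list edge-coloring or, more cleanly, as a flow/matroid-intersection computation. Each color class $C_\ell$ of the given coloring of $N$ is an independent set of $N$; within each part $B_r$ of $P$ we are allowed to keep at most $\operatorname{cap}(r)$ elements of any new color class, where $\operatorname{cap}(r)$ is the capacity of part $B_r$ (so that $\chi(P) = \max_r \lceil |B_r|/\operatorname{cap}(r)\rceil = p$). I would recolor: think of splitting each old class $C_\ell$ into a small number of new classes so that no new class uses more than $\operatorname{cap}(r)$ elements of part $B_r$, while being careful never to merge elements from two different old classes into one new class (that preserves independence in $N$). This is exactly the problem of, for each old color $\ell$ and each part $r$, writing the number $a_{\ell,r} = |C_\ell \cap B_r|$ as a sum of pieces of size at most $\operatorname{cap}(r)$, using a shared pool of new colors across the $\ell$'s, such that the pieces for a fixed part $r$ coming from different $\ell$'s do not collide — i.e., a bipartite degree-constrained subgraph / bipartite edge-coloring problem. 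Concretely, build a bipartite multigraph with one side indexed by old colors and the other by parts, put $\lceil a_{\ell,r}/\operatorname{cap}(r)\rceil$ parallel edges between $\ell$ and $r$, and properly edge-color it; the maximum degree on the old-color side is at most $t$ (one batch per part is an overcount we must control) and on the part side is at most $p$ after a careful accounting, yielding $\max\{t,p\} \le t+p-1$ colors, and then assign elements of $C_\ell \cap B_r$ to the new colors dictated by the colors of the corresponding parallel edges. Because every new color class is a subset of a single old class of $N$, it stays independent in $N$; because within each part at most $\operatorname{cap}(r)$ elements share a new color, it is independent in $P$.

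The main obstacle I anticipate is the degree bound on the bipartite graph: a naive construction gives old-color-side degree up to $\sum_r \lceil a_{\ell,r}/\operatorname{cap}(r)\rceil$, which can far exceed $t$, so the edge-coloring would use too many colors. The fix must exploit that $|C_\ell| \le \operatorname{rank}(N)$ only loosely, and more importantly that we are free to choose how finely to split $C_\ell$; the right move is to not split $C_\ell$ per part but to first lay out the elements of $C_\ell$ in a line, chop the whole line into $\lceil |C_\ell|/1 \rceil$—no, rather to interleave the splitting across parts so the old side has degree exactly matching the number of new colors that class contributes. I expect the clean way is an exchange/augmentation argument directly on the union of two matroids (the partition matroid $P$ and the partition matroid whose parts are the old color classes $C_\ell$), invoking matroid intersection or matroid union algorithmically: a common independent set / base-covering in the union of these two partition matroids gives exactly the simultaneous coloring, and the bound $t+p-1$ should fall out of a defect-version of the matroid covering theorem applied to this union. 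Making that union argument yield a strongly polynomial algorithm, and verifying the additive (rather than multiplicative) bound at each merge so that the $k$-fold iteration telescopes to $1+\sum_{i=1}^k(\chi(M_i)-1)$, is where the real work lies.
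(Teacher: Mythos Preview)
Your inductive scheme has a real gap, and it bites already at the very first merge. Two issues compound each other.

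First, the parenthetical ``$\cap_{i=1}^{j} M_i$, which is itself a matroid'' is false: the intersection of two matroids is generally not a matroid, so even if the single merge step worked you could not reapply it with $N=\cap_{i\le j}M_i$ treated as a matroid.

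Second, and more fatally, the merge step you describe---refining the old classes only, never combining elements from two different old classes into one new class---does not achieve $t+p-1$ colors. Take $X=\{a,b,c,d\}$, let $M_1$ be the partition matroid with unit-capacity parts $\{a,c\},\{b,d\}$ and $M_2$ the partition matroid with unit-capacity parts $\{a,b\},\{c,d\}$; both have chromatic number $2$, so the target is $3$ colors. An optimal $M_1$-coloring is $C_1=\{a,b\}$, $C_2=\{c,d\}$. Under your rule every new class must lie inside some $C_\ell$; but the only $M_2$-independent subsets of $C_1$ are singletons, and likewise for $C_2$, and colors used inside $C_1$ cannot be reused inside $C_2$ (the union $\{a,c\}$ or $\{b,d\}$ would violate $M_1$). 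So splitting forces $4$ colors, while the true optimum is $2$ via $\{a,d\},\{b,c\}$, which requires merging across old classes. Your bipartite edge-coloring model inherits this defect: the ``partition matroid whose parts are the old color classes'' does not encode $M_1$-independence, and no base-covering statement on that surrogate recovers the bound.

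The paper proceeds very differently. It does not fold in the partition matroids one at a time; it colors elements one at a time, all matroids present from the start. Each iteration finds an augmenting path in the Edmonds digraph of $M_1$ (so genuine exchanges in $M_1$, not just splits), but restricted to a layered subgraph $H$ in which every vertex has $B+1$ in-arcs from distinct colors, where $B=\sum_{i\ge 2}(\chi(M_i)-1)$. The path is grown backwards from an uncolored element, and at each backward step at most $B$ of the $B+1$ available colors are blocked by the partition constraints, leaving one that works for all $M_2,\dots,M_k$ simultaneously. The structural lemma showing $H$ reaches an uncolored element is where the extra $B$ colors in the palette are spent. In short, the additive bound comes from a pigeonhole on in-degrees during path extension, not from an iterated refinement of colorings.
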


The upper bound in Theorem \ref{thm:main}
matches the standard upper bound when
all of the matroids are partition matroids, 
improves by one on the existential upper bound 
for two general matroids in Theorem \ref{thm:nonconstructive2},
and improves by a factor of two 
on the existential upper bound 
for $k$ general matroids from Theorem \ref{thm:nonconstructive1}. Another immediate corollary of Theorem \ref{thm:main} is:

\begin{corollary}\label{cor:main}
There is a polynomial-time algorithm that, given a general matroid $M_1 = (X, \mathcal{I}_1)$ 
   and $k-1$  matroids $ M_2, \ldots, M_k$ that are each of the following types: graphic, laminar, transversal,  partition, or a gammoid, 
   will produce a coloring of the intersection  $M = \cap_{i=1}^k M_i$ that has worst-case approximation ratio $2k-1$. 
\end{corollary}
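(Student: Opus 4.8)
The plan is to derive \Cref{cor:main} as a black-box consequence of \Cref{thm:main} together with the fact recalled above that each of the listed combinatorial matroid types can be efficiently reduced to a partition matroid at the cost of a factor of two in the chromatic number. Concretely, for each $i\in\{2,\dots,k\}$ I would invoke that reduction on $M_i$ to compute, in polynomial time, a partition matroid $\widehat M_i$ on the ground set $X$ with the two properties that matter: (a) $\mathcal I_{\widehat M_i}\subseteq\mathcal I_i$, so that any set independent in $\widehat M_i$ is independent in $M_i$; and (b) $\chi(\widehat M_i)\le 2\,\chi(M_i)$. The matroid $M_1$ is left untouched, since \Cref{thm:main} already permits exactly one general matroid.

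Next I would feed $M_1,\widehat M_2,\dots,\widehat M_k$ into the algorithm of \Cref{thm:main}. In polynomial time this produces a coloring of $\widehat M:=M_1\cap\widehat M_2\cap\cdots\cap\widehat M_k$ using at most $1+(\chi(M_1)-1)+\sum_{i=2}^k(\chi(\widehat M_i)-1)$ colors. By property (a) we have $\mathcal I_{\widehat M}\subseteq\bigcap_{i=1}^k\mathcal I_i$, so every color class of this coloring is independent in $M=\cap_{i=1}^k M_i$; that is, the coloring produced is feasible for $M$.

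It remains to compare the number of colors used with $\chi(M)$. Since $\bigcap_{i=1}^k\mathcal I_i\subseteq\mathcal I_i$ for every $i$, any feasible coloring of $M$ is feasible for $M_i$, and hence $\chi(M_i)\le\chi(M)$ for all $i\in[k]$. Combining this with property (b),
\[
1+(\chi(M_1)-1)+\sum_{i=2}^k(\chi(\widehat M_i)-1)\;\le\;\chi(M_1)+\sum_{i=2}^k 2\chi(M_i)\;\le\;2\sum_{i=1}^k\chi(M_i)\;\le\;2k\,\chi(M),
\]
so the algorithm uses at most $2k\,\chi(M)$ colors, i.e.\ it is a $2k$-approximation (the statement is vacuous if $M$ is not colorable, e.g.\ if some $M_i$ has a loop).

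The upshot is that there is essentially no obstacle at this stage: all of the real work lives in \Cref{thm:main}, and the corollary is a routine composition. The only points that need a little care are (i) applying the combinatorial-to-partition reduction only to $M_2,\dots,M_k$, so that the hypothesis of \Cref{thm:main} (a single general matroid) is respected; (ii) verifying that the reduction shrinks the family of independent sets, which is exactly what lets a feasible coloring of the reduced intersection be pulled back to a feasible coloring of $M$; and (iii) the elementary bookkeeping above, which in fact yields the slightly stronger bound $2\sum_{i=1}^k\chi(M_i)-\chi(M_1)-(k-1)$ on the number of colors used.
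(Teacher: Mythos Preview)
Your proposal is correct and follows the same approach as the paper: reduce each combinatorial matroid $M_2,\dots,M_k$ to a partition matroid at a factor-two cost in chromatic number, apply \Cref{thm:main}, and compare against the lower bound $\chi(M)\ge\max_i\chi(M_i)$. The paper's own justification is a one-line sketch of exactly this argument, so your write-up is, if anything, more detailed (and your observation of the sharper bound $2\sum_i\chi(M_i)-\chi(M_1)-(k-1)$ is a nice bonus).
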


Corollary \ref{cor:main} follows from Theorem \ref{thm:main} by first  noting that the
chromatic number of the intersection of matroids is at least the maximum chromatic number of the individual matroids and then  applying the known efficient reductions of  these standard combinatorial matroids to partition matroids ~\cite{part_decomp_1,part_decomp_2,part_decomp_gammoid}.
Taking $k = 2$ in~Corollary \ref{cor:main} yields:

\begin{corollary}\label{cor:k=2}
There is a polynomial-time 3-approximation for matroid intersection coloring of a general matroid and a matroid which is either graphic, laminar, transversal,  partition, or a gammoid.
\end{corollary}

In the $k=2$ case, Corollary \ref{cor:k=2} gives the first polynomial-time $O(1)$-approximation algorithm for matroid intersection coloring where one of the matroids may be a general matroid. Previously, the best-known approximation for coloring a general and partition matroid was $O(\log n)$, through a reduction to Set Cover (where the sets are given through oracle access).

\medskip
\noindent \textbf{The difficulty of matroid intersection coloring.}
A natural algorithmic design approach is to, like the standard Edmonds' algorithm for coloring one matroid \cite{edmonds1968matroid,schrijver_book}, try
to design an algorithm where after $q$ iterations, it is the case that $q$ items are
feasibly colored. Thus, on each iteration, items are recolored to result in a feasible coloring
of one more element. Let us first consider the problem
of coloring a single matroid. The difficulty is that when an item is recolored, say item $x$ is recolored blue
for concreteness, then the feasibility constraints may require that some other
item, say $y$,  currently
colored blue, be recolored to some other color. In such a case, let us say that
the recoloring of $x$ to blue was fettered. 
Edmonds' algorithm for coloring one matroid addresses this issue by, on each iteration,
performing an augmenting sequence of recolorings, where each coloring corrects the infeasibility created
by the prior fettered recoloring, and the last recoloring is unfettered.

The standard interpretation of Edmonds' algorithm is to find such an augmenting sequence
by identifying a \emph{shortest path} in a particular digraph, that we will call Edmonds' digraph~\cite{schrijver_book}.
However, when attempting to adapt this approach to coloring two (or more) matroids, 
recoloring $x$ to blue may require that \emph{two (or more)} elements currently colored blue be recolored.
This naturally leads to considering algorithms where the augmenting path in Edmonds' algorithm
is replaced by an augmenting tree of hyperedges of cardinality three, 
\'a la Haxell's hypergraph matching algorithm~\cite{BF01793010}. 
However, it is not at all clear how to efficiently find such an augmenting tree that would produce a feasible
recoloring of one more item.~\footnote{For hypergraph matching, this was complicated but doable~\cite{Annamalai},
essentially because the matroids involved were all partition matroids. }
Rather than surmount this issue, our algorithm that establishes Theorem \ref{thm:main}
circumvents the issue, which we now explain.

\medskip
\noindent \textbf{Generalized Edmonds’ Algorithm.}
In Section~\ref{sec:prelim} we explain one of the two main components in the design of
the algorithm that establishes Theorem \ref{thm:main}, namely a modest generalization of Edmonds' algorithm,
which we call Generalized Edmonds' Algorithm. We emphasize that Generalized Edmonds' Algorithm is essentially a known algorithm and the generalization follows directly from Lemma 13.1.11 in~\cite{FRANK20121875}. However, the definitions and algorithm description in Section \ref{sec:prelim} will be important for understanding our matroid intersection coloring algorithm in Section \ref{sec:alg}.

One issue with building on Edmonds' algorithm to color multiple matroids, 
is that it is easy to construct instances where augmenting along the  shortest path $P$ 
in Edmonds digraph for matroid $M_1$ will not produce a feasible coloring
with respect to the other matroids. Thus we need greater flexibility with respect to how matroid $M_1$ 
can be recolored. Upon closer examination of the 
standard proof of correctness for Edmonds' algorithm~\cite{schrijver_book},
one can observe that it is sufficient that $P$ is what we will call a color-chordless path,
which is roughly a path where there is no chord/shortcut between vertices that
involve the same color. A shortest path, as it has no chords, is trivially a color-chordless path. The Generalized Edmonds' Algorithm augments along an arbitrary color-chordless path 
on each iteration, giving us greater flexibility when trying to adapt the algorithm to 
coloring multiple matroids.

\medskip
\noindent \textbf{Our Matroid Intersection Coloring Algorithm.}
In Section~\ref{sec:alg} we design and analyze the algorithm that establishes Theorem \ref{thm:main}.
At a high level, our algorithm finds a color-chordless path $P$ in the Edmonds digraph for
matroid $M_1$ that is suffix-feasible with respect to each of the partition matroids $M_2, \ldots, M_k$. 
The path $P$ is suffix-feasible with respect to a matroid $M_i$ if   iteratively augmenting along $P$ (as in Edmonds' algorithm) results in a sequence of recolorings where
each recoloring 
is unfettered in  $M_i$ \emph{at the time of the recoloring}. 
To accomplish this, the algorithm first constructs a subgraph $H$ of the Edmonds digraph $G$
for $M_1$, such that in $H$  every source-sink
path is a color-chordless path in $G$. The algorithm then finds a source-sink path $P$ within $H$,
that is suffix-feasible with respect to each of the partition matroids $M_2, \dots, M_k$, 
using an iterative greedy algorithm.

\medskip\noindent \textbf{Examples and Extensions.} In Appendix \ref{app:def} we review some basic matroid definitions and facts. In Appendix \ref{app:applications} we discuss applications of our Matroid Intersection Coloring Algorithm to generalizations of Rota's Basis Conjecture \cite{rota_basis} and the Strong Coloring Conjecture \cite{scc}. In Appendix \ref{app:runtime} we prove that our Matroid Intersection Coloring Algorithm can be implemented in time $O\left(n^3 T(M_1)\right)$, where $T(M_1)$ is the runtime of a single independence oracle call to the general matroid $M_1$. In Appendix \ref{app:generalizededmondsexample} we demonstrate one iteration of Generalized Edmonds' Algorithm on a particular instance. In Appendix \ref{app:intersectionexample} we demonstrate one iteration of our Matroid Intersection Coloring Algorithm on a particular instance. In Appendix \ref{app:hardness} we show that it is NP-hard to approximate the coloring of $k$ \textit{partition} matroids within a factor of $k^{1-\eps}$ for any $\eps > 0$ via a reduction from vertex coloring. In Appendix \ref{app:lc} we generalize our Matroid Intersection Coloring Algorithm to the setting of list coloring. 

\section{Generalized Edmonds' Algorithm for Coloring a Single Matroid}\label{sec:prelim}

This section gives a generalization of Edmonds' algorithm \cite{edmonds1968matroid,schrijver_book} for coloring the elements of a single arbitrary matroid optimally that follows directly from \cite{FRANK20121875}. See Appendix \ref{app:generalizededmondsexample} for one iteration of the algorithm on a particular instance.

As with the standard Edmonds' algorithm, the Generalized Edmonds' algorithm has the following input, output, and outer loop invariant.

\textbf{Input:} An arbitrary matroid $M$ and target chromatic number $\alpha$.

\textbf{Output:} An $\alpha$-coloring of $M$ if one exists.

\textbf{Outer Loop Invariant:} After $q$ iterations, $q$ elements of the ground set are feasibly colored. A coloring is \textit{feasible} if its color classes are independent sets in the matroid.

\medskip 
\noindent \textbf{Description of the Generalized Edmonds' Algorithm:} Iteratively update the current coloring $c$ along an arbitrary color-chordless source-sink path $P$ in the current Edmonds digraph $G$. 

The standard interpretation of Edmonds' algorithm~\cite{schrijver_book} is an instantiation of this algorithm in which the arbitrary color-chordless source-sink path is selected to be a shortest source-sink path. To finish the description, we need to define color-chordless source-sink path. Then we remind the reader of the definition of the Edmonds digraph $G$  and how a coloring is updated along a path (as these are the same in the standard description of Edmonds' algorithm~\cite{schrijver_book}).

\begin{definition}[Edmonds Digraph]\label{def:color-chordless_path}
The definition of Edmonds digraph $G=(V, A)$ depends on the matroid $M=(X, \mathcal{I})$, the target chromatic number $\alpha$, and the current coloring $c$.
Let $c : X \rightarrow \{0, 1, \dots, \alpha\}$ be the current coloring of elements, where $c(x) = 0$ if $x$ is uncolored. Let $S_1, S_2, \dots, S_{\alpha} \in \mathcal{I}$ be the color classes of $c$. Let $U \subseteq X$ be the set of elements not colored by $c$.
The vertex set $V = [\alpha] \cup X$ consists of a vertex for each element of the ground set $X$ of $M$ and one
vertex for each color. We call the vertices in $[\alpha]$ corresponding to the $\alpha$ available colors the \textbf{sources} and the uncolored elements in $U$ the \textbf{sinks.} The arc set $A= \cup_{i=1}^{\alpha} A_i$
is the union of $\alpha$ collections of arcs, where $A_i$ represents the feasible exchanges for color class $i$. In $A_i$ there is an arc $(i, x)$ to all elements $x \notin S_i$ with the property that $S_i \cup \{x\} \in \mathcal{I}$. Then, for all elements $x \notin S_i$ such that $S_i \cup \{x\} \notin \mathcal{I}$, $A_i$ contains the arc $(y, x)$ for all elements $y \in S_i$ with the property that $S_i - \{y\} \cup \{x\} \in \mathcal{I}$.
\end{definition}

\begin{definition}[Color-Chordless Path]\label{def:color-chordless_path}
Let $P = (x_1, x_2, \dots, x_{\ell})$ be a directed path in $G$. A \textbf{color-chord} of $P$ is an arc $(x_j, x_k) \in A$ between two elements $x_j, x_k \in X$ where $1 \leq j \leq k-2$, and $x_j$ is the same color as $x_{k-1}$ in the current coloring $c$ (note this is $x_{k-1}$ and not $x_k$). $P$ is \textbf{color-chordless} if it has no color-chords.
\end{definition}

\noindent
{\bf Definition of Updating Along $P$:} Let $P = (x_1, x_2, \dots, x_{\ell})$ be a source-sink path in $G$. The updated coloring $c \Delta P$ is computed by coloring $x_2$ with the color of $x_1$, and coloring $x_i$ with the color of $x_{i-1}$ in $c$ for all $3 \leq i \leq \ell$.

We now show that upon updating the current coloring $c$ along an arbitrary color-chordless source-sink path $P$, the resulting coloring $c \Delta P$ is still feasible. This follows directly from Lemma 13.1.11 in \cite{FRANK20121875} but we include a full proof for completeness.

\begin{lemma}\label{lemma:color-chordless_path}
Let $c$ be a feasible coloring of $i$ elements in $M$. Let $G$ be the Edmonds digraph of $M, c$. Let $P = (x_1, x_2, \dots, x_{\ell} )$ be a color-chordless source-sink path in $G$. Then $c \Delta P$ is a feasible coloring of $q+1$ elements in $M$.
\end{lemma}

\begin{proof}
The proof structure is as follows. First, we assume for sake of contradiction that $c \Delta P$ is not feasible in $M$, and find the first element which is not feasibly recolored. Next, we study two circuits (see Definition \ref{def:circuit}) $C, C'$ in $M$ containing this element. Lastly, we appeal to the Matroid Circuit Axiom (Fact \ref{fact:mca}) to produce a third circuit $C''$ in $M$, which we use to establish the contradiction.

Note $c \Delta P$ is a coloring of $q+1$ elements, because $x_2, x_3, \dots, x_{\ell-1}$ remain colored and $x_\ell$ becomes colored. Assume for the sake of contradiction that $c \Delta P$ is not feasible in $M$. Recolor $x_2, x_3, \dots, x_{\ell}$ according to $P$ sequentially, and let $x_t$ be the first element for which recoloring $x_t$ makes the coloring infeasible. Suppose $x_t$ is recolored with color $j$.

Let $S_j \in \mathcal{I}$ be the $j$th color class in $c$, and $S_j' \in \mathcal{I}$ be the $j$th color class immediately prior to recoloring $x_{t-1}$. Note $t \geq 3$, because recoloring $x_2$ with color $x_1$ is a feasible recoloring by definition of $G$. Thus $S_j \cup \{x_t\} \notin \mathcal{I}$ because of the existence of arc $(x_{t-1}, x_t)$ in $G$, so there exists a circuit $C \subseteq S_j \cup \{x_t\}$ containing $x_t$. Further, no elements in $C-\{x_t\}$ are recolored before $x_{t-1}$. This is because all elements $x \in C-\{x_t\}$ have $S_j - \{x\} \cup \{x_t\} \in \mathcal{I}$, and thus $(x, x_t) \in A$. Therefore if $x_h \in C - \{x_t\}$ for some $h \leq t-2$, then $(x_h, x_t) \in A$ would be a color-chord of $P$, a contradiction. Thus $C - \{x_t\} \subseteq S_j'$.

Because recoloring $x_t$ makes the coloring infeasible, we have $S_j' - \{x_{t-1}\} \cup \{x_t\} \notin \mathcal{I}$, so there exists a circuit $C' \subseteq S_j' - \{x_{t-1}\} \cup \{x_t\}$ containing $x_t$. Further, $C \neq C'$ because $C$ contains $x_{t-1}$ but $C'$ does not. Thus by the Matroid Circuit Axiom (Fact \ref{fact:mca}), there exists a circuit $C'' \subseteq (C \cup C') - \{x_t\} \subseteq S_j'$. But $S_j' \in \mathcal{I}$, a contradiction.
\end{proof}

As with Edmonds' algorithm, the Generalized Edmonds' algorithm can be extended to 
compute $\chi(M)$ by using binary search to find the minimum $\alpha$ 
where the algorithm finds a feasible coloring. 

\section{Matroid Intersection Coloring Algorithm }\label{sec:alg}

In this section we prove our main result, Theorem \ref{thm:main}, that 
there is an efficient algorithm for coloring the intersection of a general matroid
and a collection of partition matroids. 
In Subsection \ref{subsec:intersectionalgorithmdescription}, we state the algorithm. In Subsection \ref{subsec:intersectionanalysis}, we prove that the algorithm 
produces a feasible coloring that uses at most the desired number of colors. 
See Appendix \ref{app:intersectionexample} for one iteration of the algorithm
on a particular instance. 

\subsection{Algorithm Description }
\label{subsec:intersectionalgorithmdescription}

Our Matroid Intersection Coloring Algorithm has the following input, output, and outer loop invariant.

\textbf{Input:} An arbitrary matroid $M_1$ and partition matroids $M_i$ for $2 \leq i \leq k$ on common ground set $X$.

\textbf{Output:} A $\left(1+\sum_{i=1}^k \left(\chi(M_i) -1\right)\right)$-coloring of $M = \cap_{i=1}^k M_i$.

\textbf{Outer Loop Invariant:} After $q$ iterations, $q$ elements of the ground set are feasibly colored. A coloring is \textit{feasible} if its color classes are independent sets in all of the matroids.

\medskip
\noindent \textbf{Description of our  Matroid Intersection Coloring Algorithm:} Iteratively update the current coloring $c$ along an arbitrary  source-sink path $P$, in the color-chordless subgraph $H$ of the current Edmonds digraph $G$ of $M_1$, with the property that $P$ is suffix-feasible with respect to each of the partition matroids $M_2, \ldots, M_k$. 

To complete the description of our algorithm, we need to define $H$, what it means for a path $P$ to be suffix-feasible with respect to a matroid $M_i$,
and explain how to compute such a path $P$. It should be obvious from the definition of $H$ how to efficiently
compute $H$.
For ease of notation, let $\alpha = \chi(M_1)$ and $B = \sum_{i=2}^k \left(\chi(M_i)-1\right)$. 
Note that $\chi(M_1)$ can be computed using Edmonds' algorithm for single matroid coloring, and $\chi(M_i)$ for $2 \leq i \leq k$ is simply the ceiling of the maximum ratio of part size to part capacity over all parts in $M_i$.

\begin{definition}[Color-Chordless Subgraph]
The subgraph $H$ is computed from the Edmonds digraph $G=(V=[\alpha + B] \cup X,A)$ of $M_1$. The vertices in $H$ are partitioned
into layers $L_0, L_1, \ldots, L_h$, which are defined inductively as follows. 
The base layer is $L_0$, which is defined to be $[\alpha + B]$, the collection of color
nodes in $G$. There are no arcs entering $L_0$. Then, $L_i$ is defined
to be the collection of vertices $x$ not in  $L_0, \ldots, L_{i-1}$ such that $x$ has $B+1$ incoming arcs in $G$
from nodes of $B+1$ distinct colors in earlier layers of the subgraph $H$. In this definition, the color of an element node is its color in the coloring $c$. Let $y_1, \ldots, y_{B+1}$ be a set of such nodes (so $y_i$ and $y_j$ have distinct colors for $i \neq j$) with the property that for all $i \in [B+1]$, there does not exist a vertex $z_i$ with the same color as $y_i$, such that
 $z_i$ is in an earlier layer than $y_i$ and 
 $(z_i, x)$ is an arc in $G$. 
 Then the arcs $(y_1, x), \ldots, (y_{B+1}, x)$ 
 are added to $H$. Intuitively, the construction of 
 $H$ prefers arcs from earlier layers.
 \end{definition}

 \begin{definition}[Suffix-Feasible Path]
     \label{defn:suffixfeasible}
     Let $P = (x_1, x_2, \dots, x_{\ell} = u)$ be a path in the Edmonds digraph $G$, constructed from
     a feasible coloring $c$ of matroid $M_1=(X, \mathcal{I}_1)$, 
     that ends at a sink, that is an uncolored vertex $u$. Let $M_h$ be an arbitrary matroid.

     \begin{itemize}
         \item The updated coloring $c' = c \Delta P$ is defined as follows. In $c'$, 
a vertex $x_i$, $3 \le i \le \ell$, is  colored $c(x_{i-1})$. If $x_1$ is a color vertex, that is $x_1 \in [\alpha+B]$,
then $c'(x_2) = x_1$. If $x_1$ is not a color vertex, that is $x_1 \notin [\alpha+B]$, then $c'(x_2) = c(x_1)$ and $x_1$ is uncolored in $c'$, that
is $c'(x_1) = 0$.
\item
The path $P$ is suffix-feasible with respect to $M_h$ if 
     for every suffix $P_j = (  x_{\ell-j}, x_{\ell-j+1}, \dots, x_{\ell} = u)$ of $P$ it is the case 
     that $c \Delta P_j$ is a feasible coloring in $M_h$.
     \end{itemize}     
 \end{definition}

\paragraph{Algorithm to Compute $P$:}
The source-sink path  $P = (x_1=r, x_2, x_3, \dots, x_{\ell} = u)$ is computed iteratively.
The loop invariant is that after $j$ iterations,  the suffix $P_j$ will have been computed.
Initially $P_0$ is an arbitrary uncolored element $u$. 
We now explain how $P_{j}$ is computed from $P_{j-1}$. 
The vertex $x_{\ell-j}$ is set to an  arbitrary vertex $z$ in $H$ such that there is an edge $(z, x_{\ell-j+1})$ in $H$,
and the path $ (z, x_{\ell-j+1}, \ldots, x_\ell)$ is suffix-feasible with respect to each of the partition matroids $M_2, \ldots, M_k$.

\subsection{Algorithm Analysis}
\label{subsec:intersectionanalysis}

In order to prove the correctness of our algorithm, we will prove the following statements. In Lemma \ref{lemma:feas_M1}, we will show all source-sink paths in $H$ are color-chordless paths in $G$. In Lemma \ref{lemma:main_struct}, we will show that $H$ contains an uncolored element $u$. In Lemma \ref{lemma:suffix-feasible}, we will show that during construction of the suffix-feasible path $P$, a vertex $z$ always exists to extend the suffix. In Lemma \ref{lemma:final}, we will tie everything together and show that our algorithm maintains a feasible coloring $c$ in $M_1, M_2, \dots, M_k$.

Throughout the analysis, we fix the following objects:

\begin{itemize}
    \item An arbitrary matroid $M_1$ and partition matroids $M_i$ for $2 \leq i \leq k$ on common ground set $X$.
    \item An arbitrary feasible coloring $c$ of $q$ elements of $M = \bigcap_{i=1}^k M_i$, using $(\alpha+B)$ colors where $\alpha = \chi(M_1)$ and $B = \sum_{i=2}^k (\chi(M_i)-1)$.
    \item The Edmonds digraph $G$ of the matroid $M_1$ with respect to the coloring $c$.
    \item The color-chordless subgraph $H$ in $G$ computed by our algorithm.
    \item The source-sink suffix-feasible path $P$ in $H$ computed by our algorithm. 
\end{itemize}

\begin{lemma}\label{lemma:feas_M1}
All source-sink paths in $H$ are color-chordless paths in $G$.
\end{lemma}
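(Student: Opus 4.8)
The plan is to exploit the layered construction of $H$ directly, with essentially no matroid theory required. First I would record two structural facts about $H$. Every arc of $H$ is added from an in-neighbor lying in a strictly earlier layer to the node it enters, so along any source-sink path $P=(x_1,\dots,x_\ell)$ in $H$ the layer indices of the $x_i$ strictly increase; in particular, if $a<b$ then $x_a$ lies in a strictly earlier layer than $x_b$. Second, in $G$ every arc enters an element node (arcs have the form $(i,x)$ or $(y,x)$ with $x\in X$), so the color nodes $[\alpha+B]=L_0$ have no incoming arcs in $G$, hence none in $H$; consequently the only vertex of $P$ that can be a color node is the source $x_1$, and every $x_i$ with $i\ge 2$ is an element node.

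Next I would argue by contradiction. Suppose $P$ has a color-chord, i.e.\ an arc $(x_j,x_k)\in A$ with $x_j,x_k\in X$, $1\le j\le k-2$, and $c(x_j)=c(x_{k-1})$. Since $x_j\in X$ is an element node, $j\ge 2$, so $k\ge 4$ and $x_{k-1}$ is an element node with a well-defined color $c(x_{k-1})$. Now consider the arc $(x_{k-1},x_k)$ of $P$, which is an arc of $H$: by the rule defining $H$, when $x_k$'s incoming arcs were chosen, $x_{k-1}$ was selected as one of them subject to the requirement that there is no vertex $z$ with the same color as $x_{k-1}$, lying in an earlier layer than $x_{k-1}$, such that $(z,x_k)\in A$. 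But $x_j$ is exactly such a vertex: it has color $c(x_j)=c(x_{k-1})$, it precedes $x_{k-1}$ on $P$ (since $j\le k-2<k-1$) and hence lies in a strictly earlier layer than $x_{k-1}$, and $(x_j,x_k)\in A$. This contradicts the choice of $x_{k-1}$, so $P$ can have no color-chord.

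I do not expect a genuine obstacle: the subgraph $H$ was engineered precisely so that the incoming arcs chosen for each element come from the earliest-available node of each of $B+1$ distinct colors, which is exactly the condition that rules out color-chords. The only points needing care are bookkeeping ones — separating color nodes from element nodes so that the indices $j, k-1, k$ really land on element nodes as the color-chord definition demands, and the monotonicity of layer indices along a path in $H$ so that ``earlier on $P$'' implies ``in an earlier layer.'' Once those are pinned down, the clash with the construction rule for $H$ is immediate.
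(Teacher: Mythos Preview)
Your proposal is correct and follows essentially the same argument as the paper: assume a color-chord $(x_j,x_k)$ exists, observe that $x_j$ lies in a strictly earlier layer than $x_{k-1}$ because layer indices increase along paths in $H$, and derive a contradiction with the earliest-layer selection rule used when the arc $(x_{k-1},x_k)$ was added to $H$. Your write-up is somewhat more careful about the bookkeeping (explicitly noting layer monotonicity along $H$-paths and that only $x_1$ can be a color node), but the core idea is identical to the paper's proof.
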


\begin{proof}

Let $P' = (x_1, x_2, \dots, x_\ell)$ be an arbitrary source-sink path in $H$. Assume for the sake of contradiction that there exists a color-chord of $P'$, that is, an arc $(x_j, x_k) \in A$ such that $1 \leq j \leq k-2$ and $x_j$ has the same color as $x_{k-1}$ in the current coloring $c$. Consider the time when $(x_{k-1}, x_k)$ was added to $H$. At this time, $x_{k-1}$ had minimum level among all $x \in V(H)$ with the same color as $x_{k-1}$ such that $(x, x_k) \in A$. But since $j \leq k-2$, the node $x_j$ has a lower level then $x_{k-1}$, a contradiction. Thus $P'$ is color-chordless. 

\end{proof}

We now state and prove Lemma \ref{lemma:main_struct}. The central claim is that if a cut in $G$ separates the color nodes from the uncolored elements, then it contains many arcs originating from different colors. Therefore $H$ must contain an uncolored element $u$, because otherwise it could be extended by one more level.

Recall that the set of nodes $L = [\alpha + B]$ are called \textbf{sources} and the set of uncolored elements $U$ are called \textbf{sinks}. We say that $(\overline{Y}, Y)$ is a \textbf{source-sink separating cut} if $L \subseteq V \setminus Y$ and $U \subseteq Y$. We denote by $\delta(\overline{Y},Y)$ the set of arcs going from $\overline{Y}$ to $Y$, that is $\delta(\overline{Y},Y) = \{(x, y) \in A : x \in \overline{Y}, y \in Y\}$.

\begin{lemma}\label{lemma:main_struct} Let $(\overline{Y}, Y)$ be a source-sink separating cut in $G$. Then there exists a subset of arcs $A' \subseteq \delta(\overline{Y}, Y)$ such that $|A'| > B \cdot |Y|$ and no two arcs in $A'$ share both the same target node $x$ and the same source node color $j$.
\end{lemma}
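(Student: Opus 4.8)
The plan is to analyze a source-sink separating cut $(\overline{Y}, Y)$ in the Edmonds digraph $G$ of the matroid $M_1$, and extract a large collection of arcs crossing the cut with distinct (target, source-color) labels. The natural approach is to argue that the cut cannot be "too small" because of matroid rank/exchange considerations, and then to charge the arcs to the elements of $Y$. Let me sketch this more concretely. Write $Y \cap X = \{u_1, \ldots, u_t\}$ for the uncolored elements and other elements of $G$ on the sink side. Since $(\overline{Y}, Y)$ separates sources from sinks, and $U \subseteq Y$, every uncolored element lies in $Y$.

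First I would recall what arcs into $Y$ look like: for each color class $S_j$ and each element $x \in Y \setminus S_j$, either $S_j \cup \{x\} \in \mathcal{I}_1$ (giving an arc $(j, x)$ from the source $j$) or there are arcs $(y, x)$ for every $y \in S_j$ with $S_j - \{y\} + \{x\} \in \mathcal{I}_1$. The key structural fact, which is the standard ingredient in the correctness proof of Edmonds' algorithm, is: for a fixed color $j$, consider the set $Y_j = Y \cap S_j$ of color-$j$ elements on the sink side, together with the elements $x \in Y$ reachable from color $j$ via arcs crossing the cut. A rank argument on $S_j$ shows that the number of elements $x \in Y \setminus S_j$ for which there is \emph{some} arc from $\overline{Y} \cap (\{j\} \cup S_j)$ into $x$ is at least $|Y_j| + [\text{$U$ has a color-$j$-reachable sink}]$ — or more precisely, that $\operatorname{rank}_{M_1}(\text{stuff in } Y \text{ reachable from color } j) \ge |Y_j| + (\text{number of crossing-reachable sinks})$. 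I would phrase this by counting, for each color $j$, the arcs in $\delta(\overline{Y},Y)$ with source of color $j$: collapsing parallel arcs with the same target, we get one arc per reachable target, and I claim the number of \emph{distinct targets} reachable from color $j$ across the cut is at least the number of color-$j$ elements inside $Y$ plus one (accounting for a sink) — using that $c$ is feasible so each $S_j$ is independent, and an exchange argument showing you can't "trap" $|Y_j|$ color-$j$ elements plus a sink behind fewer than $|Y_j|+1$ distinct entry points.

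Summing over all $\alpha + B$ colors: the total number of (target, source-color) pairs realized by $\delta(\overline{Y},Y)$ is at least $\sum_j (|Y_j| + \mathbf{1}[\text{sink reachable from } j])$. The first part sums to $|Y \cap (\text{colored elements})|$, which is $|Y| - |U \cap Y| = |Y| - |U|$ roughly, but more usefully I want a bound of the form $> B|Y|$. Here I suspect the actual argument is cleaner: for \emph{each} uncolored element $u \in Y$ and each color $j$, either there's a direct arc $(j,u) \in \delta$ or a chain argument forces a crossing arc with color-$j$ label into something in $Y$ "on behalf of" $u$; since there are $\alpha + B > B$ colors and the contributions for distinct sinks can be kept disjoint by choosing crossing arcs carefully (or by a Hall-type / defect argument on the bipartite incidence between sinks and color-labels), one gets more than $B$ arc-labels per element of $Y$. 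So I would set $A'$ to be a maximal subcollection of $\delta(\overline{Y},Y)$ with distinct (target, source-color) pairs and show $|A'| > B|Y|$ by exhibiting, per element of $Y$, more than $B$ distinct labels.

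\textbf{The main obstacle} I expect is making the "distinct targets reachable from color $j$" count precise and ensuring the contributions across different colors and different sinks do not double-count in a way that ruins the strict inequality $|A'| > B|Y|$. The clean way around this is probably to prove the per-color rank lower bound first as a self-contained sub-claim: fix color $j$, let $Z_j \subseteq Y$ be the set of targets of arcs in $\delta(\overline{Y},Y)$ with source-color $j$, and show $|Z_j| \ge |Y \cap S_j| + 1$ whenever $Y$ contains at least one sink (which it does, since $U \subseteq Y$ and we may assume $U \ne \emptyset$) — this uses the matroid exchange property applied to $S_j$ and the independent set $(S_j \setminus Y) \cup \{\text{a sink}\}$ or similar, exactly mirroring the standard Edmonds argument. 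Then $\sum_{j=1}^{\alpha+B} |Z_j| \ge \sum_j |Y \cap S_j| + (\alpha + B) = |Y \cap (X \setminus U)| + \alpha + B \ge (|Y| - |U|) + \alpha + B$; since $|U| \le |Y|$ and each uncolored element needs a color in $[\alpha + B]$ feasibly... I would then reconcile this with $B|Y|$ using $\alpha \ge 1$ and that $U$ nonempty forces the strict bound. If the numbers don't immediately give $> B|Y|$, the fix is to observe that $|A'| = \sum_j |Z_j|$ already counts with distinct (target, source-color) pairs by construction, and to push the per-sink argument to yield $B+1$ labels per vertex of $Y$ directly rather than going through color classes — I'd be prepared to do whichever of these two accountings closes cleanly, and I expect the authors' proof picks exactly one.
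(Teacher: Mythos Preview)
Your proposal has a genuine gap. The per-color lower bound you aim for, $|Z_j| \ge |Y \cap S_j| + 1$, is false in general. Take $Y = \{u, v\}$ where $u$ is the unique uncolored element, $v \in S_1$, and $\{u,v\}$ is a circuit in $M_1$. Then the only color-$1$ arc into $u$ is $(v,u)$, which does not cross the cut, and there are no color-$1$ arcs into $v$ at all (arcs in $A_1$ only target elements outside $S_1$). Hence $Z_1 = \emptyset$, while $|Y \cap S_1| + 1 = 2$. Even setting this aside, the sum you compute, $(|Y| - |U|) + (\alpha + B)$, does not dominate $B|Y|$ once $B \ge 2$ and $|Y|$ is moderately large, so neither accounting you sketch closes.

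The idea you are missing is to bring in an \emph{optimal} $\alpha$-coloring $T_1,\ldots,T_\alpha$ of $M_1$ as a witness. Restricting to $Y$, write $T_i' = T_i \cap Y$ and $S_j' = S_j \cap Y$. For each fixed $j$, applying the exchange axiom between the independent sets $S_j'$ and $T_i'$ yields at least $|T_i'| - |S_j'|$ elements $x \in T_i' \setminus S_j'$ with $S_j' \cup \{x\} \in \mathcal{I}_1$; since the $T_i'$ are disjoint, summing over $i$ gives $|Z_j| \ge |Y| - \alpha|S_j'|$. Now summing over all $\alpha + B$ colors $j$ gives $\sum_j |Z_j| \ge (\alpha+B)|Y| - \alpha(|Y| - |U|) = B|Y| + \alpha|U| > B|Y|$, the strict inequality holding because $U \ne \emptyset$. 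The translation from ``$S_j' \cup \{x\} \in \mathcal{I}_1$'' to ``there is a crossing arc of source-color $j$ into $x$'' is the short proposition you already anticipated: if $S_j \cup \{x\}$ is independent the arc $(j,x)$ crosses, and otherwise the unique circuit must leave $Y$, producing a crossing arc $(y,x)$ with $y \in S_j \setminus Y$. The role of the optimal coloring is exactly to manufacture the factor-of-$\alpha$ savings on the subtracted term, which your single-coloring argument cannot supply.
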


\begin{proof}
We will first show that many elements in $Y$ can change colors feasibly in $M_1$. We then observe that each one of these color changes induces a corresponding arc in $\delta(\overline{Y}, Y)$. 

By assumption, $M_1$ is an $\alpha$-colorable matroid. Let $T_1, T_2, \dots, T_\alpha$ be the color classes of an $\alpha$-coloring of $M_1$. Let $S_1, S_2, \dots, S_{\alpha+B}$ be the color classes of our $(\alpha+B)$-coloring $c$. Restrict both colorings to $Y$, by letting $T_i' = T_i \cap Y$ for $i = 1, 2, \dots, \alpha$ and $S_j' = S_j \cap Y$ for $j = 1, 2, \dots, \alpha+B$.

We now loop over $j = 1, 2, \ldots, \alpha+B$, and show that many elements $x \in Y \setminus S_j'$ have $S_j' \cup \{x\} \in \mathcal{I}_1$. For each $j = 1, 2, \dots, \alpha+B$, and for each $i = 1, 2, \dots, \alpha$, consider $S_j', T_i'$. There exist at least $|T_i'|-|S_j'|$ elements $x \in T_i' \setminus S_j' \subseteq Y \setminus S_j'$ such that $S_j' \cup \{x\} \in \mathcal{I}_1$ via the matroid exchange property (see Definition \ref{def:matroid}). Thus, for each $j = 1, 2, \dots, \alpha+B$, the number of elements $x \in Y \setminus S_j'$ such that $S_j' \cup \{x\} \in \mathcal{I}_1$ is at least $\sum_{i=1}^\alpha (|T_i'| - |S_j'|)$. Summing over all $j = 1, 2, \dots, \alpha+B$, the number of $j, x$ pairs with this property is at least

\begin{align*}
\sum_{j=1}^{\alpha+B} \sum_{i=1}^{\alpha} \left(|T_i'| - |S_j'|\right)
&= \sum_{j=1}^{\alpha+B} \sum_{i=1}^{\alpha} |T_i'| - \sum_{i=1}^{\alpha} \sum_{j=1}^{\alpha+B} |S_j'| \\
&= \sum_{j=1}^{\alpha+B} |Y| - \sum_{i=1}^{\alpha} |Y \setminus U| \\
&> \sum_{j=1}^{\alpha+B} |Y| - \sum_{i=1}^{\alpha} |Y| \\
&= (\alpha+B)|Y| - \alpha|Y| \\
&= B|Y|
\end{align*}

We now show that each $j, x$ pair such that $x \in Y \setminus S_j'$ and $S_j' \cup \{x\} \in \mathcal{I}_1$ induces a corresponding arc in $\delta(\overline{Y}, Y)$.

\begin{claim}\label{claim:main_struct_helper}
For all $j = 1, 2, \dots, \alpha+B$ and $x \in Y \setminus S_j'$, if $S_j' \cup \{x\} \in \mathcal{I}_1$, then there exists an arc $(y, x) \in \delta(\overline{Y}, Y)$ where $y$ has color $j$.
\end{claim}

\begin{proof}
If $S_j \cup \{x\} \in \mathcal{I}_1$, then $(j, x) \in \delta(\overline{Y}, Y)$. Else, $S_j \cup \{x\} \notin \mathcal{I}_1$. Thus, $S_j \cup \{x\}$ contains a unique circuit $C$ (Fact \ref{fact:unique_circuit}). This circuit must contain some element $y \in C \setminus Y$, for otherwise, we would have $C \subseteq Y$, implying $C \subseteq S_j' \cup \{x\}$, a contradiction. Thus $S_j - \{y\} \cup \{x\} \in \mathcal{I}_1$, so $(y, x) \in \delta(\overline{Y}, Y)$.
\end{proof}

Lastly, for each $j, x$ pair such that $x \in Y \setminus S_j'$ and $S_j' \cup \{x\} \in \mathcal{I}_1$, generate a unique arc $(y, x) \in \delta(\overline{Y}, Y)$ where $y$ has color $j$, and let $A' \subseteq \delta(\overline{Y}, Y)$ be the collection of these arcs. Then $|A'| > B|Y|$ and no two arcs in $A'$ share both the same target node $x$ and source node color $j$.
\end{proof}

We now remark how Lemma \ref{lemma:main_struct} shows that the color-chordless subgraph $H$ contains an uncolored element $u$. Assume for sake of contradiction that $H$ does not contain an uncolored element $u$. Let $Y = V \setminus V(H)$. Then $(\overline{Y}, Y)$ is a source-sink separating cut in $G$. By Lemma \ref{lemma:main_struct}, there must exist some node $x \in Y$ with at least $B+1$ incoming arcs from distinct colors in $V(H)$. But then $H$ could be extended by one more level, a contradiction.

We now show that during construction of the suffix-feasible path $P$, a vertex $z$ always exists to extend the suffix.

\begin{lemma}\label{lemma:suffix-feasible}
Let $P_j = (x_{\ell-j}, x_{\ell-j+1}, \dots, x_\ell=u)$ be the suffix-feasible path with respect to each of the partition matroids $M_2, M_3, \dots, M_k$ computed by our algorithm after $j$ iterations of the construction of $P$. Then there exists a vertex $z$ in $H$ such that there is an edge $(z, x_{\ell-j})$ in $H$ and $P_{j+1} = (z, x_{\ell-j}, x_{\ell-j+1}, \dots, x_\ell=u)$ is suffix-feasible with respect to the each of the partition matroids $M_2, M_3, \dots, M_k$.
\end{lemma}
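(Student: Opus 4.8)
The plan is to show that the node $x_{\ell-j}$, which by construction of $H$ has at least $B+1$ incoming arcs in $H$ originating from vertices of $B+1$ \emph{distinct} colors, has at least one such in-neighbor $z$ for which prepending $z$ to $P_j$ preserves suffix-feasibility with respect to every partition matroid $M_2,\dots,M_k$. Fix $i\in\{2,\dots,k\}$ and let us count the in-neighbors $z$ that are ``bad'' for $M_i$. Recall that $M_i$ is a partition matroid, so its parts give a partition of $X$, and a set is independent in $M_i$ iff it contains at most one element from each part. Since $c\Delta P_j$ is already feasible in $M_i$ (the inductive hypothesis), prepending $z$ and forming $c\Delta P_{j+1}$ changes the coloring only by recoloring $x_{\ell-j}$ with $c(z)$ (and, if $z$ is a color node, nothing else; if $z$ is an element node, additionally uncoloring $z$, which only helps). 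So $c\Delta P_{j+1}$ fails to be feasible in $M_i$ only if, in the color class of $c\Delta P_j$ that has color $c(z)$, there is already an element in the same $M_i$-part as $x_{\ell-j}$ — i.e., $z$ is bad for $M_i$ precisely when the color class colored $c(z)$ (after the $P_j$-update) contains an element of $x_{\ell-j}$'s $M_i$-part other than $x_{\ell-j}$ itself. The key observation is that, because $c\Delta P_j$ is feasible in $M_i$, the part of $M_i$ containing $x_{\ell-j}$ contributes at most one element to each color class; hence there is \emph{at most one color} $t$ such that the color-$t$ class of $c\Delta P_j$ already has an element of that part. Therefore, among the $B+1$ in-neighbors $z$ of $x_{\ell-j}$ in $H$, which have $B+1$ distinct colors, at most one has the bad color for $M_i$. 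Summing over $i=2,\dots,k$, the number of in-neighbors that are bad for at least one partition matroid is at most $\sum_{i=2}^{k} 1 = k-1 \le B$ (in fact the bound one wants is exactly $\sum_{i=2}^k(\text{something})$; the clean version is that each $M_i$ rules out at most one color, and there are at most $B$ colors to rule out in total — this needs a slightly more careful accounting, see below).

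The main subtlety — and the step I expect to be the real obstacle — is making the counting tight enough to get $B+1$ in-neighbors minus ``bad'' ones $\ge 1$. A crude ``one bad color per matroid'' bound gives only $k-1$ bad colors, which is $\le B$ only if every $\chi(M_i)\ge 2$ (which is true, since a nontrivial matroid has chromatic number at least $1$, and if some $\chi(M_i)=1$ the matroid is trivial and can be dropped). Actually $B=\sum_{i=2}^k(\chi(M_i)-1)\ge k-1$ always, so the crude bound already suffices: there are at most $k-1\le B$ bad colors among the $B+1$ distinct colors of the in-neighbors, leaving at least one good in-neighbor $z$. I would, however, double-check the edge case where $x_{\ell-j}$ lies in a singleton part of some $M_i$ (then that $M_i$ never forbids anything, so even better), and the case $j=0$ where $P_0=(u)$ is a single uncolored sink and feasibility of $c\Delta P_0=c$ in each $M_i$ is given.

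Concretely, the steps in order: (1) invoke the construction of $H$ to assert $x_{\ell-j}$ has $B+1$ in-arcs in $H$ from vertices with $B+1$ pairwise-distinct colors (this holds because $x_{\ell-j}\in V(H)\setminus L_0$, as it lies on a source-sink path of $H$ and is not itself a source); (2) fix a partition matroid $M_i$, use feasibility of $c\Delta P_j$ in $M_i$ to show the $M_i$-part of $x_{\ell-j}$ meets at most one color class of $c\Delta P_j$, hence at most one color $c(z)$ makes $z$ bad for $M_i$; (3) note that by the definition of updating along a path, $c\Delta P_{j+1}$ differs from $c\Delta P_j$ only in recoloring $x_{\ell-j}$ to $c(z)$ and possibly uncoloring an element node $z$, and that every shorter suffix of $P_{j+1}$ is a suffix of $P_j$ (so its update is already feasible in $M_i$ by hypothesis) — this is what lets suffix-feasibility be checked just at the single new prefix step; (4) sum over $i=2,\dots,k$ to get at most $k-1\le B$ forbidden colors; (5) conclude a good $z$ exists. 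Step (3) is where one must be careful that ``suffix-feasible'' is genuinely an \emph{incremental} condition: adding $z$ in front only creates one new suffix to check, namely $c\Delta P_{j+1}$ itself, since every other suffix $c\Delta P_t$ for $t\le j$ is unchanged.
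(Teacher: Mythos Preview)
Your overall strategy matches the paper's: count the ``bad'' colors and use that $x_{\ell-j}$ has $B+1$ in-neighbors in $H$ with pairwise distinct colors. However, step (2) contains a genuine error. From ``the part of $M_i$ containing $x_{\ell-j}$ contributes at most one element to each color class'' you conclude ``there is at most one color $t$ such that the color-$t$ class already has an element of that part.'' This is a non sequitur. If the part containing $x_{\ell-j}$ has, say, five elements (capacity $1$), then in any feasible coloring the four elements other than $x_{\ell-j}$ occupy four \emph{distinct} color classes, so four colors---not one---are bad for $M_i$. Your subsequent detour through $k-1\le B$ rests entirely on this incorrect per-matroid bound of $1$.

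The correct per-matroid bound is $\chi(M_i)-1$, and this is exactly what the paper proves. In $c'=c\Delta P_j$ the element $x_{\ell-j}$ is uncolored, so the part $C_i$ of $M_i$ containing $x_{\ell-j}$ can have at most $\chi(M_i)-1$ color classes at its capacity constraint (otherwise $C_i$ would be too large for $M_i$ to be $\chi(M_i)$-colorable). Summing over $i=2,\dots,k$ gives at most $\sum_{i=2}^k(\chi(M_i)-1)=B$ bad colors, leaving at least one good in-neighbor among the $B+1$. With the right bound the totals match on the nose and no inequality $k-1\le B$ is needed. Note also that the paper allows partition matroids with arbitrary capacities $d_j$, not just $d_j=1$; your phrasing ``at most one element from each part'' already specializes to unit capacities, whereas the paper's argument (counting color classes \emph{at capacity}) handles the general case directly. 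Your step (3)---that suffix-feasibility only requires checking the single new suffix $P_{j+1}$---is correct and is used the same way in the paper.
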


\begin{proof}
Let $Z$ be the set of vertices in $H$ with incoming arcs to $x_{\ell-j}$. Let $c' = c \Delta P_j$, and note $x_{\ell-j}$ is uncolored in $c'$. For each partition matroid $M_i$, let $C_i$ be $x_{\ell-j}$'s part in $M_i$. Each $C_i$ has at most $\chi(M_i) - 1$ color classes of $c'$ at the capacity constraint. Thus there are at most $B = \sum_{i=2}^k \left(\chi(M_i)-1\right)$ color classes of $c'$ which are at the capacity constraint in some $C_i$. Because $|Z| = B+1$, there exists some vertex $z \in Z$ with a color $r$ such that $r$'s color class in $c'$ is not at the capacity constraint in \textit{any} $C_i$. Let $P_{j+1} = (z, x_{\ell-j}, x_{\ell-j+1}, \dots, x_\ell=u)$. $c \Delta P_{j+1}$ is identical to $c \Delta P_j$, except that $x_{\ell-j}$ has color $r$ and $z$ is uncolored. By choice of $z$, $c \Delta P_{j+1}$ is a feasible coloring in partition matroids $M_2, M_3, \dots, M_k$. Further, all other suffixes of $c \Delta P_{j+1}$ are suffixes of $c \Delta P_j$. Thus $P_{j+1}$ is suffix-feasible with respect to partition matroids $M_2, M_3, \dots, M_k$.
\end{proof}

\begin{lemma}\label{lemma:final}
$c \Delta P$ is a feasible coloring of $q+1$ elements of $M = \bigcap_{i=1}^k M_i$ using $\alpha+B$ colors.
\end{lemma}

\begin{proof}
$P$ is a color-chordless path in $G$ by Lemma \ref{lemma:feas_M1}, so $c \Delta P$ is a feasible coloring of $q+1$ elements of $M_1$ by Lemma \ref{lemma:color-chordless_path}. $P$ is suffix-feasible with respect to partition matroids $M_2, M_3, \dots, M_k$ via Lemma \ref{lemma:main_struct} and Lemma \ref{lemma:suffix-feasible}, so $c \Delta P$ is a feasible coloring of $q+1$ elements of $M_2, M_3, \dots, M_k$.
\end{proof}

\section{Conclusion}

The takeaway message from this paper is that there is a polynomial-time
algorithm for matroid intersection coloring with a reasonable performance/approximation
guarantee as long as all but one of the matroids are combinatorial.

As far as the algorithmics community is behind the combinatorics community in generating
existential results related to matroid intersection coloring, it is even significantly 
more behind in generating conjectures. So to match the following conjecture from \cite{AharoniBergerGuoKotlar2025},
that the $2k-1$ bound in Theorem \ref{thm:nonconstructive1} can be replaced by $k$,

\begin{conjecture}
\cite{AharoniBergerGuoKotlar2025}
For any $k$ general matroids $M_1, \ldots, M_k$ it is the case that
 $\chi\left(\cap_{i=1}^k M_i\right) \le k \max_{i=1}^k \chi(M_i)$.
\end{conjecture} 

We also propose the following conjecture.

\begin{conjecture}
\label{conj:constructive-2k-1}
There exists a polynomial-time algorithm that for 
$k$ arbitrary matroids $M_1,\dots,M_k$ computes a coloring of their intersection using at most
$O(k)\cdot\max_{i}\chi(M_i)$ colors.
\end{conjecture}

A natural first step toward proving Conjecture \ref{conj:constructive-2k-1} 
would be to prove it for the case $k=2$.
Our algorithm for matroid intersection coloring with one arbitrary matroid
does not seem readily extensible to two general matroids.
In particular,  there is an instance of two laminar matroids and a partial coloring $c$, 
where for every source-sink path $P$ in the Edmonds digraph $G$ for $M_1$, it is the
case that $c \Delta P$ is not a feasible coloring in $M_2$. 
One approach for an algorithm for two general matroids is to use
an augmenting tree, instead of an augmenting path, as does 
Haxell's proof of Theorem  \ref{thm:haxellstrongcoloring},
but at least one significant difficulty with this approach is
that the independence structure in general matroids is significantly
more complicated than it is for partition matroids.

We should also note that as ``covering'' and ``coloring'' are essentially synonymous in this setting,
our result is another example of a natural special set cover instance where better approximation is achievable
than is achievable for general set cover instances, where the best achievable approximation
ratio is $\Theta(\log n)$, provided P $\ne$ NP.

\section{Acknowledgments}
Benjamin Moseley is supported in part by Google Research Award, NSF grants CCF-2121744 and CCF-1845146, and ONR Grant N000142212702. Kirk Pruhs is supported by National Science Foundation grant CCF-2209654.

\bibliography{bib}
\bibliographystyle{splncs04}

\appendix

\section{Matroid Definitions and Facts}\label{app:def}

\begin{definition}[Matroid]\label{def:matroid}
A \textbf{matroid} is a pair $M = (X, \mathcal{I})$ where $X$ is called the ground set, and $\mathcal{I} \subseteq 2^X$ are called the independent sets, with the following properties:

\begin{enumerate}
    \item \textbf{Empty Set:} $\emptyset \in \mathcal{I}$.
    \item \textbf{Subset Property:} If $J \in \mathcal{I}$ and $I \subseteq J$, then $I \in \mathcal{I}$.
    \item \textbf{Exchange Property:} If $I, J \in \mathcal{I}$ and $|J| > |I|$, then there exists an $x \in J \setminus I$ s.t. $I \cup \{x\} \in \mathcal{I}$.
\end{enumerate}
\end{definition}

\begin{definition}[Partition Matroid]\label{def:partition}
A \textbf{partition matroid} is a matroid $M = (X, \mathcal{I})$ defined by a partition $X = X_1 \cup X_2 \cup \dots \cup X_m$ with associated capacity constraints $d_1, d_2, \dots, d_m$, where $\mathcal{I} = \{I \subseteq X : |I \cap X_i| \leq d_i \ \forall i = 1, 2, \dots, m\}$. If capacity constraints are not given, they are typically assumed to be $1$.
\end{definition}

\begin{definition}[Rank Function]
For a matroid $M = (X, \mathcal{I})$, and for all $A \subseteq X$, the \textbf{rank function} $r(A)$ is the size of the largest independent set $I \subseteq A$.
\end{definition}

\begin{definition}[Laminar Matroid]\label{def:laminar}
A laminar matroid $M = (X, \mathcal{I})$ is a matroid associated with a laminar family $\mathcal{F}$ of subsets of $X$ -- a family of subsets is said to be laminar if for any $A, B \in \mathcal{F}$, $A \cap B = \emptyset$, $A \subseteq B$, or $B \subseteq A$. Further, each set $A \in \mathcal{F}$ is associated with a positive integer $b(A)$, which is called the \textit{capacity} of the set. A set $I \subseteq X$ is independent in $M$ if $|I \cap A| \leq b(A)$ for all $A \in \mathcal{F}$.
\end{definition}

\begin{definition}[Circuit]\label{def:circuit}
For a matroid $M = (X, \mathcal{I})$, a \textbf{circuit} $C \subseteq X$ of $M$ is an inclusion-minimal dependent set.
\end{definition}

\begin{fact}[Unique Circuits \cite{schrijver_book}]\label{fact:unique_circuit}
For a matroid $M = (X, \mathcal{I})$, a subset $A \subseteq X$ has $r(A) = |A|-1$ if and only if $A$ contains a unique circuit $C$.
\end{fact}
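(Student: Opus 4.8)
The plan is to prove the two implications separately, using only the matroid rank function and the inclusion-minimality of circuits (\Cref{def:circuit}).

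First I would handle the ``if'' direction: assume $A$ contains a unique circuit $C$. Since $C$ is a dependent subset of $A$, the set $A$ is dependent, so $r(A) \le |A| - 1$. For the matching lower bound I would pick any element $e \in C$ (such an $e$ exists because $\emptyset$ is independent, so $C \neq \emptyset$) and argue that $A - \{e\}$ is independent: otherwise $A - \{e\}$ would contain some circuit $C'$, and $e \notin C'$ would make $C' \neq C$ a second circuit of $A$, contradicting uniqueness. Hence $r(A) \ge |A - \{e\}| = |A| - 1$, and the two bounds combine to give $r(A) = |A| - 1$.

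For the ``only if'' direction I would assume $r(A) = |A| - 1$, so $A$ is dependent and contains at least one circuit, and then rule out a second one. Supposing $C_1 \neq C_2$ are circuits contained in $A$, inclusion-minimality means neither contains the other, so I can fix an element $e_1 \in C_1 \setminus C_2$. Then $C_1 - \{e_1\}$ is independent by minimality of $C_1$, and I would extend it to a maximal independent subset $I$ of $A$, so that $|I| = r(A) = |A| - 1$. Now $e_1 \notin I$, since $C_1 - \{e_1\} \subseteq I$ together with $e_1 \in I$ would force the dependent set $C_1$ to lie inside $I$. Also $C_2 \subseteq A$ is dependent, so $C_2 \not\subseteq I$, which yields an element $e_2 \in C_2 \setminus I$; moreover $e_2 \neq e_1$ because $e_1 \notin C_2 \ni e_2$. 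Thus $I$ omits the two distinct elements $e_1, e_2$ of $A$, so $|I| \le |A| - 2$, contradicting $|I| = |A| - 1$.

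The main obstacle — really the only subtlety — is the choice of which circuit to ``break'' in the second direction: one must delete an element of $C_1 \setminus C_2$ precisely so that $C_2$ survives intact inside $A - \{e_1\}$ and can certify a second element omitted by $I$. An alternative would be to apply the circuit elimination axiom to a common element of $C_1$ and $C_2$, but that forces a separate case when the two circuits are disjoint, so the rank/extension argument above is the cleaner route. Everything else is a routine unwinding of the definition of the rank function.
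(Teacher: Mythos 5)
Your proof is correct. The paper states this as a fact with a citation to Schrijver and gives no proof of its own, but your argument is the standard textbook one: both directions are sound, and the only implicit step (that a maximal independent subset of $A$ attains $r(A)$) is a routine consequence of the exchange property.
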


\begin{fact}[Matroid Circuit Axiom \cite{schrijver_book}]\label{fact:mca}
For a matroid $M = (X, \mathcal{I})$ and any two distinct circuits $C, C' \subseteq X$ with common element $x$, there exists a circuit $C'' \subseteq (C \cup C') - \{x\}$.
\end{fact}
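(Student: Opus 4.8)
The plan is to reduce the claim to the single statement that the set $D := (C \cup C') \setminus \{x\}$ is dependent in $M$. Indeed, once $D$ is known to be dependent, being finite it contains an inclusion-minimal dependent subset, which is a circuit $C''$ by \Cref{def:circuit}; and $C'' \subseteq D = (C\cup C') \setminus \{x\}$ is exactly what is required. So the entire task becomes showing $D$ is dependent, and I would in fact prove the slightly stronger rank inequality $r(C \cup C') \le |C \cup C'| - 2$: since $x \in C \cap C' \subseteq C \cup C'$ we have $|D| = |C \cup C'| - 1$, and then by the subset property $r(D) \le r(C \cup C') \le |C \cup C'| - 2 < |D|$, so $D$ cannot be independent.

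To prove $r(C \cup C') \le |C \cup C'| - 2$, I would first use that $C$ and $C'$ are distinct circuits — hence distinct inclusion-minimal dependent sets — to conclude $C' \not\subseteq C$ (a dependent proper subset of the circuit $C$ is impossible by minimality), so we may fix an element $y \in C' \setminus C$. Since $C'$ is a circuit, $C' \setminus \{y\}$ is independent; extend it to an inclusion-maximal independent subset $I$ of $C \cup C'$. A maximal independent subset of a set is also a largest one (if some independent $I' \subseteq C \cup C'$ had $|I'| > |I|$, the exchange property of \Cref{def:matroid} would supply an element of $I' \setminus I$ extending $I$, contradicting maximality), so $|I| = r(C \cup C')$. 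Now $I$ omits at least two elements of $C \cup C'$: it omits $y$, since $y \in I$ together with $C' \setminus \{y\} \subseteq I$ would force $C' \subseteq I$ and hence make $C'$ independent, a contradiction; and it omits some $z \in C$, since $C \subseteq I$ would likewise make $C$ independent. Finally $z \ne y$ because $y \notin C$ while $z \in C$. Therefore $I \subseteq (C \cup C') \setminus \{y,z\}$, giving $r(C \cup C') = |I| \le |C \cup C'| - 2$, as needed.

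I do not expect a genuine obstacle here: this is the classical circuit-elimination axiom, and the only ingredients used beyond the three matroid axioms in \Cref{def:matroid} are the existence of inclusion-minimal dependent sets (finiteness) and the fact that maximal independent subsets of a set are maximum — the latter being the one-line exchange argument noted above, which I would either include explicitly or simply cite alongside the other standard facts in this appendix (see \cite{schrijver_book}). The only point requiring a little care is the bookkeeping that the two omitted elements $y$ and $z$ are genuinely distinct, which is handled cleanly by the choice $y \in C' \setminus C$. A shorter alternative would instead invoke submodularity of the rank function applied to $C$ and $C'$ — using $r(C) = |C|-1$, $r(C') = |C'|-1$, and $r(C \cap C') = |C \cap C'|$ (as $C \cap C'$ is a proper subset of the circuit $C$, hence independent) — to obtain $r(C \cup C') \le (|C|-1) + (|C'|-1) - |C\cap C'| = |C \cup C'| - 2$ directly; but since this merely trades one standard lemma for another, I would keep the self-contained maximal-set argument.
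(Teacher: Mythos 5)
Your proof is correct. Note, however, that the paper offers no proof of this statement at all: it is stated as a \emph{Fact} in the appendix and delegated entirely to the citation of Schrijver's book, so there is no in-paper argument to compare against. What you have written is a complete, self-contained derivation of the (weak) circuit elimination axiom from the three axioms of \Cref{def:matroid}: the reduction to showing $(C\cup C')\setminus\{x\}$ is dependent, the rank bound $r(C\cup C')\le |C\cup C'|-2$ via a maximal independent set $I \supseteq C'\setminus\{y\}$ that must omit both $y$ and some $z\in C$, and the careful choice $y\in C'\setminus C$ guaranteeing $y\ne z$ are all sound; the auxiliary facts you invoke (distinct circuits are incomparable, maximal independent subsets of a fixed set are maximum) are correctly justified from the exchange property. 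This is essentially the textbook proof (the rank-submodularity variant you mention as an alternative is the other standard route), so it adds nothing beyond what the citation already covers, but as a verification of the black-boxed fact it is complete and would serve if the paper wished to be self-contained.
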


\section{Effective Algorithmic Applications of Theorem~\ref{thm:main}}
\label{app:applications}
This section demonstrates two applications of Theorem~\ref{thm:main}
to derive polynomial-time algorithms. 
In Subsection \ref{subsec:gen_rota} we consider a generalization of Rota's Basis Conjecture,
and in Subsection \ref{subsec:graphmatroidintersection} we consider coloring the
        intersection of the independent sets of a graph and a matroid. 

\subsection{Generalized Rota's Basis Conjecture}\label{subsec:gen_rota}

In this subsection, we show how Theorem \ref{thm:main} makes progress on a generalization of Rota's Basis Conjecture. We emphasize that Theorem \ref{thm:main} does \emph{not} improve the state-of-the-art for the actual Rota's Basis Conjecture.

\begin{conjecture}[Generalized Rota's Basis Conjecture]\label{conj:gen_rota}
 If  
the ground set $X$ of a rank $r$ matroid $M = (X, \mathcal{I})$ can be partitioned into $m$ monochromatic
bases $B_1, \ldots, B_m$, then $X$ can be partitioned into $\max\{m, r\}$ rainbow
independent sets $R_1, \ldots, R_{\max\{m, r\}}$.
\end{conjecture}

In the context of Conjecture \ref{conj:gen_rota}, a subset of elements is ``rainbow'' if it contains no two elements of the same color, where the coloring on the elements is given by the monochromatic bases $B_1, \ldots, B_m$. We prove the following result which makes constructive progress on Conjecture \ref{conj:gen_rota} by reducing the setting to coloring the intersection of a general matroid and a partition matroid.

\begin{theorem}\label{thm:gen_rota}
 If  
the ground set $X$ of a rank $r$ matroid $M = (X, \mathcal{I})$ can be partitioned into $m$ monochromatic
bases $B_1, \ldots, B_m$, then $X$ can be partitioned into $m+r-1$ rainbow
independent sets $R_1, \ldots R_{m+r-1}$.
\end{theorem}

\begin{proof}
Let $M_1 = M$ and define partition matroid $M_2 = (X, \mathcal{I}_2)$ where the parts of $M_2$ are $B_1, B_2, \dots, B_m$ and the capacity of each part is $1$. Apply Theorem \ref{thm:main} to produce a coloring of $M_1 \cap M_2$. We have $\chi(M_1) = m$ and $\chi(M_2) = r$, so this coloring uses $1+\left(\chi(M_1)-1\right) + \left(\chi(M_2) - 1\right) = m+r-1$ colors. Further, the independent sets in $M_1 \cap M_2$ are rainbow independent sets of $M$ in the original setting. Thus the coloring of $M_1 \cap M_2$ gives a partitioning of $M$ into $m+r-1$ rainbow independent sets.
\end{proof}

To the best of our knowledge, the only known result related to Conjecture \ref{conj:gen_rota} is Theorem \ref{thm:nonconstructive2}. Applying the same reduction from Conjecture \ref{conj:gen_rota} to coloring the intersection of a general matroid and a partition matroid, Theorem \ref{thm:nonconstructive2} shows that $X$ can be (nonconstructively) partitioned into $m+r$ rainbow independent sets $R_1, \dots, R_{m+r}$. Thus our main contribution is a constructive version of this result (and a modest improvement).

Rota's Basis Conjecture is Conjecture \ref{conj:gen_rota} in the special case $m=r$.

\begin{conjecture}[Rota's Basis Conjecture] \cite{rota_basis}
    \label{conj:rota}
 If  
the ground set $X$ of a rank $r$ matroid $M = (X, \mathcal{I})$ can be partitioned into $r$ monochromatic
bases $B_1, \ldots, B_r$, then $X$ can be partitioned into $r$ rainbow
bases
$R_1, \ldots, R_r$.
\end{conjecture}

The combinatorial literature on Conjecture \ref{conj:rota} is significant. We just touch on a few key results here. Theorem \ref{thm:nonconstructive2} proves that $X$ can be partitioned into $2r$ rainbow independent sets. \cite{halfway_rota} proved that the matroid admits $(1/2-o(1))r$ disjoint rainbow bases. \cite{Pokrovskiy2025} proved that the matroid admits $r-o(r)$ disjoint rainbow independent sets of size $r-o(r)$. \cite{MontgomerySauermann2025RotaPackingCovering} proved that the matroid admits $r-o(r)$ disjoint rainbow bases, and can be covered with $r+o(r)$ rainbow independent sets. We believe the aforementioned results are nonconstructive -- at minimum, it is not explicitly stated whether the results yield polynomial-time algorithms.

The 12th Polymath Project was dedicated to Rota's Basis Conjecture \cite{rota_polymath}, where it was shown that $X$ can be partitioned into $2r-2$ rainbow independent sets constructively. This is strictly better than our result from Theorem \ref{thm:gen_rota} applied to the Rota setting $m=r$, which shows that $X$ can be partitioned into $2r-1$ independent sets constructively. Thus our main contribution is constructive progress on Conjecture \ref{conj:gen_rota}, but not Conjecture \ref{conj:rota}.

\subsection{Graph-Matroid Intersection Coloring}\label{subsec:graph-matroid}
\label{subsec:graphmatroidintersection}

An instance of graph-matroid intersection coloring consists of a simple graph $G=(V, E)$, and a matroid 
$M=(V, \mathcal{I})$, whose ground set is the vertex set of the graph.
Let $\mathcal J$ be the collection of graph \emph{independent sets} in $G$, i.e. a subset $J \subseteq V$ is independent if no two vertices in $J$ are adjacent in $G$. We can  equivalently view $G$ as the set system $(V, \mathcal{J})$. 
We consider problems related to coloring the intersection of $G$ and $M$, where
$G \cap M := (V, \mathcal{J} \cap \mathcal{I})$, using as few colors possible.

Haxell~\cite{Haxell2011} introduced the \emph{Happy Dean problem} to illustrate
coloring the intersection of a graph with a partition matroid.
Let each vertex represent a faculty member, and let an edge \((v,w)\) in
a graph \(G\) signify that professors \(v\) and \(w\) cannot attend the same meeting.
The accompanying partition matroid \(M\) encodes departmental structure: the parts are the departments, and a set is independent in \(M\) if it
contains at most one member from each department.
The dean wishes to schedule the minimum number of meeting slots so that: (1) every faculty member is assigned to exactly one slot, (2) no two conflicting faculty meet together (\(G\)-independence), and (3)  no two faculty from the same department meet together (\(M\)-independence). Assigning a distinct \emph{color} to each meeting slot, this scheduling task is
equivalent to coloring the intersection \(G \cap M\):
each color class must be independent in \(G\) (no edge)
and in \(M\) (at most one vertex per partition part).

The following is the well-known Strong Coloring Conjecture~\cite{scc}. 

\begin{conjecture} [Strong Coloring Conjecture]
\label{conj:strongcoloringconjecture}
If $M$ is a partition matroid with chromatic number at most $2 \Delta$ and $G$ is
   a graph with maximum degree at most $\Delta$
   then $\chi(M \cap G) \le 2 \Delta$.
\end{conjecture}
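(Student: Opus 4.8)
The final statement is the Strong Coloring Conjecture~\cite{scc}, a well-known open problem, so what follows is the line of attack I would pursue together with an honest account of where it stalls. The first step is to recast the statement in transversal language. Write the parts of the partition matroid $M$ as $P_1, \ldots, P_t$, so that $\chi(M) \le 2\Delta$ is the same as $|P_j| \le 2\Delta$ for all $j$; pad each $P_j$ with vertices that are isolated in $G$ until $|P_j| = 2\Delta$, which alters neither $\Delta$ nor the difficulty. A coloring of $M \cap G$ with $2\Delta$ colors is then precisely a decomposition of $V$ into $2\Delta$ pairwise disjoint \emph{independent transversals} of $\{P_1, \ldots, P_t\}$, where an independent transversal selects one vertex per part and induces an independent set of $G$. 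Thus the target is: exhibit $2\Delta$ disjoint independent transversals simultaneously.

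By Haxell's independent transversal theorem~\cite{BF01793010}, a single such transversal $T_1$ exists because every part has size $2\Delta \ge 2\Delta(G)$; color it $1$. The natural next move is to iterate, but after deleting $T_1$ every part has size $2\Delta - 1$, and Haxell's size threshold $2\Delta$ is tight (there are max-degree-$\Delta$ graphs whose parts have size $2\Delta - 1$ and which admit no independent transversal), so a naive iteration breaks at the very first step. The approach I would take is to avoid committing to an arbitrary first transversal and instead maintain a stronger invariant along the lines of ``the residual instance, suitably weighted, still lies in the regime where the topological argument behind Haxell's theorem forces a transversal,'' and, better yet, to push the whole Sperner/KKM fixed-point machinery underlying that theorem to produce all $2\Delta$ transversals in one shot, e.g.\ via a single simplicial map whose fixed point encodes an entire valid $2\Delta$-coloring.

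The crux --- and the reason the conjecture is open --- is exactly this simultaneity: no known ``rainbow'' or list strengthening of Haxell's theorem keeps the part-size threshold at $2\Delta$ while producing disjoint transversals for all colors at once, and every general bound I am aware of (Haxell's $3\Delta$-type result and its subsequent improvements to $c\Delta$ for some constant $c$ strictly above $2$) loses a multiplicative constant precisely because removing one independent transversal shrinks every remaining part by one, which forces the working threshold upward. I do not expect a short route past this. A realistic program would first settle structured cases --- $G$ bipartite, $G$ of bounded treewidth (where a dynamic program over a tree decomposition can simultaneously respect the graph and the partition matroid), or graphs whose independence system is itself a matroid, such as disjoint unions of cliques, where the task reduces to coloring the intersection of two partition matroids and \Cref{thm:main} applies directly --- and would then treat the general conjecture as the genuine open problem it is. In particular, \Cref{thm:main} cannot be used as a black box here: the independent sets of a graph do not in general form a matroid, so $G$ cannot take the role of the single general matroid, and a new idea beyond the matroid reductions used in this paper is needed.
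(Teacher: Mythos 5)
This statement is a conjecture, not a theorem: the paper states the Strong Coloring Conjecture precisely because it is a well-known open problem, and offers no proof of it (its own Theorem~\ref{thm:ourstrongcoloring} only achieves $\Delta + \chi(M) + 1$ colors, i.e.\ roughly $3\Delta$ in the regime $\chi(M) = 2\Delta$, and Haxell's Theorem~\ref{thm:haxellstrongcoloring} gives $3\Delta - 1$). You correctly recognized this and did not manufacture a proof, which is the right call. Your reformulation as finding $2\Delta$ pairwise disjoint independent transversals, your identification of the obstruction (Haxell's $2\Delta$ part-size threshold is tight, so the residual instance after extracting one transversal falls below the threshold and naive iteration dies immediately), and your observation that no known rainbow/simultaneous strengthening of Haxell's theorem preserves the $2\Delta$ threshold are all accurate and are exactly why the constant in all known results sits strictly above $2$. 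One small refinement: the paper's own route for applying Theorem~\ref{thm:main} to graph--matroid intersection is not to treat $G$ as a matroid but to decompose $E(G)$ into $\Delta+1$ matchings via Vizing and encode each as a partition matroid; this still only yields $3\Delta+1$ colors here, so your bottom line --- that the machinery of this paper does not touch the conjecture --- stands.
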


Haxell \cite{haxell_scn} gave a nonconstructive proof of  the following weakening of the Strong Coloring Conjecture:

\begin{theorem}
\label{thm:haxellstrongcoloring}
If $M$ is a partition matroid  and $G$
   a graph with maximum degree at most $\Delta$ then $\chi(M \cap G) \le \max(3 \Delta(G) -1, \chi(M))$.
\end{theorem}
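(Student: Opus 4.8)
The plan is to obtain the bound from \Cref{thm:main} by replacing the non-matroidal independence system $\mathcal J$ of $G$ with an intersection of partition matroids. First I would fix a proper edge colouring of $G$, say $E(G)=E_1\cup\cdots\cup E_t$ with $t\le\Delta+1$ by Vizing's theorem (computable in polynomial time). For each $j$, let $N_j$ be the partition matroid on the vertex set $V$ whose parts are the edges of $E_j$ (each a part of size two) together with a singleton part for every vertex not covered by $E_j$; then a set $S\subseteq V$ is independent in $N_j$ exactly when it contains at most one endpoint of every edge of $E_j$. Since every edge of $G$ lies in some $E_j$, we get $\mathcal J=\bigcap_{j=1}^{t}\mathcal I(N_j)$, and hence $G\cap M = M\cap N_1\cap\cdots\cap N_t$ as set systems. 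Note that $\chi(N_j)\le 2$ for every $j$.

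I would then invoke \Cref{thm:main} with $M_1:=M$ playing the role of the general matroid and $N_1,\dots,N_t$ playing the roles of the partition matroids. This produces, in polynomial time, a colouring of $G\cap M$ using at most
\[
1+\bigl(\chi(M)-1\bigr)+\sum_{j=1}^{t}\bigl(\chi(N_j)-1\bigr)\;\le\;\chi(M)+t\;\le\;\chi(M)+\Delta+1
\]
colours. (When $M$ is a partition matroid, as in the statement, the reduction can be shaved by one: $G\cap M$ is then literally the graph obtained from $G$ by adding a clique on each part of $M$, whose maximum degree is at most $\Delta+\chi(M)-1$, so greedily colouring it uses at most $\chi(M)+\Delta$ colours.) In particular this already establishes the claimed bound $\max(3\Delta-1,\chi(M))$ whenever $\chi(M)\le 2\Delta-1$, and — unlike Haxell's argument — this route goes through verbatim for an \emph{arbitrary} matroid $M$, not just a partition matroid, giving $\chi(M\cap G)\le\chi(M)+\Delta+1$ in that generality.

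The step I expect to be the real obstacle is the complementary regime $\chi(M)\ge 2\Delta$. There the target $\max(3\Delta-1,\chi(M))$ equals $3\Delta-1$ when $2\Delta\le\chi(M)\le 3\Delta-1$ — strictly below both reductions above — and equals $\chi(M)$ when $\chi(M)\ge 3\Delta-1$, meaning the graph must be absorbed with \emph{no} overhead beyond what $M$ already forces. This last statement is exactly the fact that a graph of maximum degree $\Delta$ has strong chromatic number at most $3\Delta-1$ (pad each part of $M$ up to size $\chi(M)\ge 3\Delta-1$ and take a strong colouring), which Haxell proves via a topological fixed-point / augmenting-tree argument of precisely the kind our framework is built to avoid; \Cref{thm:main} cannot reach it directly, since $M$ alone contributes $\chi(M)-1$ to the guarantee and any nontrivial graph constraint contributes at least one further partition matroid on top. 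One can try to shrink the $G$-side cost — it suffices that each $N_j$ partition $V$ into cliques of $G$ with every edge of $G$ covered by some part — but examples such as $K_{\Delta,\Delta}$ still need $\Delta$ such partitions, so this improves only lower-order terms; matching the exact constant $3\Delta-1$ in the near-critical range, and in particular meeting the bound $\chi(M)$ once $\chi(M)\ge 3\Delta-1$, seems to require a genuinely different idea than \Cref{thm:main} supplies.
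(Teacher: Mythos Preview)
The paper does not prove \Cref{thm:haxellstrongcoloring}; it is quoted as a result of Haxell with a nonconstructive proof, precisely to contrast it with the paper's own constructive bound in \Cref{thm:ourstrongcoloring}. So there is no ``paper's own proof'' to compare against.

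Your reduction of $G$ to an intersection of $\le \Delta+1$ partition matroids via an edge colouring, followed by an application of \Cref{thm:main}, is exactly the argument the paper gives for \Cref{thm:ourstrongcoloring}; it yields $\chi(M\cap G)\le \chi(M)+\Delta+1$ (for arbitrary $M$), not Haxell's bound. Your ``shaved'' variant $\chi(M)+\Delta$ via greedy colouring of the clique-augmented graph is correct only for partition matroids with unit capacities (so that independence in $M$ is literally graph independence of the added cliques); for general capacities that identification fails.

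The genuine gap is the one you yourself flag: once $\chi(M)\ge 2\Delta$, the target $\max(3\Delta-1,\chi(M))$ is strictly below anything \Cref{thm:main} can produce, since the graph side alone contributes at least $\Delta$ (and in fact $\Delta+1$ via Vizing) to the sum $1+\sum_i(\chi(M_i)-1)$ regardless of how cleverly you decompose it. In particular, the regime $\chi(M)\ge 3\Delta-1$ demands $\chi(M\cap G)=\chi(M)$ with \emph{zero} overhead from $G$, which is precisely the statement $s\chi(G)\le 3\Delta-1$ that Haxell establishes by an augmenting-tree argument outside the scope of \Cref{thm:main}. Your proposal is therefore not a proof of \Cref{thm:haxellstrongcoloring}; it is (a correct rediscovery of) the proof of \Cref{thm:ourstrongcoloring}, together with an accurate diagnosis of why the gap to Haxell's bound cannot be closed by these methods.
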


A corollary of Theorem \ref{thm:haxellstrongcoloring}, that more closely matches the Strong Coloring Conjecture is:
if $M$ is a partition matroid with chromatic number at most $3 \Delta -1$ and $G$ is
   a graph with maximum degree at most $\Delta$
   then $\chi(M \cap G) \le 3 \Delta-1$. There is also a constructive version of Haxell's result in the
   literature. 

   \begin{theorem}~\cite{GrafH20,GrafHH22} 
   \label{thm:effectivecoloring}
There is a polynomial-time algorithm that, given a partition matroid $M$ and a graph
$G$ with maximum degree at most $\Delta$ produces a coloring of $G \cap M$ using
at most $\max(\chi(M), (3+\varepsilon)\Delta)$ colors for any fixed constant $\varepsilon > 0$.
 \end{theorem}

The algorithm in \cite{GrafH20,GrafHH22} combines an
augmenting-tree local-search routine with
an iterative random sparsification routine  guided by the constructive Lovász Local Lemma. 
It was conjectured in \cite{ab06} that the Strong Coloring Conjecture still holds if the
condition that $M$ is a partition matroid is relaxed to allow $M$ to be an arbitrary matroid.

\begin{conjecture}
\label{conj:Aharoni}
If $M$ is an arbitrary  matroid with chromatic number at most $2 \Delta$ and $G$
   a graph with maximum degree at most $\Delta$    then $\chi(M \cap G) \le 2 \Delta$.
   \end{conjecture}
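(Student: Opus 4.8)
The plan is to attack Conjecture~\ref{conj:Aharoni} constructively, in the spirit of the Generalized Edmonds' framework behind Theorem~\ref{thm:main}: fix a palette $\{1,\dots,2\Delta\}$, maintain a partial coloring of $V$ that is simultaneously independent in $G$ and in $M$, and show that whenever some vertex $u$ is uncolored one can always augment to color one more vertex without growing the palette. One should first observe why the obvious black-box route is not enough: by Vizing's theorem $E(G)$ partitions into $\Delta+1$ matchings $E_1,\dots,E_{\Delta+1}$, each giving a partition matroid $P_j$ (parts $=$ edges of $E_j$ plus singletons, capacities $1$) with $\chi(P_j)=2$ and $\mathcal J=\bigcap_j \mathcal I(P_j)$, so Theorem~\ref{thm:main} applied to $M,P_1,\dots,P_{\Delta+1}$ yields only $\chi(M\cap G)\le \chi(M)+\Delta+1\le 3\Delta+1$. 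To reach $2\Delta$ we cannot treat the $\Delta+1$ matching‑matroids separately; the algorithm's internals must be reopened.

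The augmenting object I would use is a single alternating tree that couples the two sides. On the $M$‑side the tree behaves like the Generalized Edmonds' Algorithm: moving a vertex $v$ to a new color $c$ creates at most one $M$‑infeasibility (a circuit), repaired along a color‑chordless, suffix‑feasible path. On the $G$‑side the tree branches: if $v$ is recolored $c$, each $G$‑neighbor of $v$ currently colored $c$ must itself move, so the branching factor at a node is at most its $G$‑degree, hence at most $\Delta$. The crucial slack is that, with $2\Delta$ colors and $G$‑degree at most $\Delta$, every vertex has at least $\Delta$ "graph‑free" colors at every moment, and at least one of those is simultaneously available on the $M$‑side unless we are deep inside the tree. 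A leaf that is both $M$‑unfettered and $G$‑free yields, by the usual reversal of the alternating sequence, a valid recoloring that colors $u$ — this gains one vertex per round.

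The steps, in order, would be: (i) formalize this joint tree (Edmonds' color‑chordless path for $M$ interleaved with Haxell‑style branching for $G$) and prove that a good leaf produces a feasible recoloring coloring one more vertex, reusing the color‑chordless path correctness argument for the $M$‑repairs; (ii) prove a "no blocked tree" lemma: if every way to grow the tree is blocked, then more than $2\Delta$ colors would have to be in simultaneous use in the closed neighborhood of some tree node, contradicting $\chi(M)\le 2\Delta$ together with $\Delta(G)\le\Delta$ and the $\Delta$‑free‑colors observation; (iii) bound the running time, using that color‑chordless paths have length $O(|X|)$, the $G$‑branching is at most $\Delta$, and the tree can be explored in polynomial time, giving polynomially many rounds of polynomial work.

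The main obstacle is step (ii), and it is exactly why Conjecture~\ref{conj:Aharoni} remains open. Every known way to bound such alternating trees for a \emph{general} matroid $M$ charges the $M$‑circuit repairs and the $G$‑edge conflicts essentially independently — which is precisely what produces Haxell's $3\Delta-1$ in Theorem~\ref{thm:haxellstrongcoloring} and, through the argument above, the bound $\chi(M\cap G)\le\chi(M)+\Delta+1$. Getting to $2\Delta$ requires removing that double accounting: one would need a structural lemma showing that an $M$‑repair and a $G$‑conflict along the tree can be made to \emph{share} a color slot, so the tree's color footprint grows by one per two levels rather than one per level. I do not see how to establish that with the present toolkit; the alternatives — a Sperner‑type topological fixed‑point argument adapted to non‑partition matroids, or a reduction of $M$ to a partition matroid at a factor‑two loss — both run into the known barriers for general (e.g.\ binary) matroids recorded in \cite{imposs_1,imposs_2}. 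So while the framework above is, I believe, the right shape of a constructive proof, closing the $3\Delta\to 2\Delta$ gap is the crux and should be flagged as the open problem it is.
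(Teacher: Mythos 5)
The statement you were asked to prove is a \emph{conjecture} (attributed to Aharoni and Berger \cite{ab06}), and the paper does not prove it: the paper only establishes the weaker bound $\chi(M \cap G) \le \Delta + \chi(M) + 1$ (Theorem~\ref{thm:ourstrongcoloring}), and it does so by exactly the Vizing-decomposition-plus-Theorem~\ref{thm:main} route that you describe and then correctly discard as insufficient for the $2\Delta$ target. Your arithmetic there is right: with $\chi(M) \le 2\Delta$ and $\Delta+1$ matching matroids of chromatic number $2$, Theorem~\ref{thm:main} yields $\chi(M) + \Delta + 1 \le 3\Delta + 1$, so any proof of the conjecture must go beyond treating the matching matroids as separate partition constraints.

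As a proof, however, your proposal has a genuine and fatal gap, which to your credit you flag yourself: step (ii), the ``no blocked tree'' lemma, is not established, and it is not a technical detail to be filled in --- it is the entire mathematical content of the conjecture. The counting you gesture at (more than $2\Delta$ colors simultaneously blocked in the closed neighborhood of a tree node) is precisely the style of local argument that is known to top out near $3\Delta$, as in Haxell's Theorem~\ref{thm:haxellstrongcoloring} for the partition-matroid case, because the $M$-side circuit repairs and the $G$-side adjacency conflicts each consume their own color budget and nothing forces those budgets to overlap. Neither the paper nor the cited literature supplies the structural lemma you would need to make an $M$-repair and a $G$-conflict share a color slot for a general matroid $M$; even the fractional version of the conjecture is proved in \cite{ab06} by topological methods that do not produce an augmenting-tree certificate. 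So your write-up should be read as a reasonable research program with an honestly identified open core, not as a proof; the statement remains open, and the paper treats it as such.
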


Applying Theorem \ref{thm:main} to graph matroid intersection coloring we obtain the following result. 

\begin{theorem}
\label{thm:ourstrongcoloring}
There is a polynomial-time algorithm that given a general matroid $M$ and a graph
$G$ with maximum degree at most $\Delta$ produces a coloring of $G \cap M$ using
at most 
$\Delta +  \chi(M)+ 1$ colors.
\end{theorem}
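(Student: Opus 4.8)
The plan is to reduce the graph‑independence constraint to a small family of partition matroids and then invoke \Cref{thm:main}.

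First I would run a polynomial‑time proper edge‑coloring algorithm (e.g.\ the Misra--Gries implementation of Vizing's theorem) to partition the edge set $E$ into $\Delta+1$ matchings $E_1,\dots,E_{\Delta+1}$; when $E=\emptyset$ we take no matchings at all. For each $j\in[\Delta+1]$ define a partition matroid $N_j=(V,\mathcal I_j')$ whose parts are the pairs $\{u,v\}$ with $(u,v)\in E_j$, together with a singleton part $\{w\}$ for every vertex $w$ not covered by $E_j$, and where a set is independent iff it meets each part in at most one vertex. Since every part of $N_j$ has size at most $2$, we have $\chi(N_j)\le 2$, and these matroids are constructed in polynomial time.

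Next I would verify that the intersection is the right one. A set $S\subseteq V$ is independent in all of $N_1,\dots,N_{\Delta+1}$ if and only if, for every edge $(u,v)\in E$, $S$ does not contain both endpoints: indeed $(u,v)$ lies in exactly one matching $E_j$, where $\{u,v\}$ is a part of $N_j$, and conversely each part of each $N_j$ is either such an edge or a singleton. Hence $S$ is independent in all the $N_j$ exactly when $S\in\mathcal J$, so $\mathcal I\cap\bigcap_{j=1}^{\Delta+1}\mathcal I_j'=\mathcal I\cap\mathcal J$; that is, $M\cap N_1\cap\cdots\cap N_{\Delta+1}=G\cap M$ as set systems, and any feasible coloring of the former is a feasible coloring of $G\cap M$.

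Finally, apply \Cref{thm:main} with the general matroid $M$ playing the role of $M_1$ and the $\Delta+1$ partition matroids $N_1,\dots,N_{\Delta+1}$ playing the roles of $M_2,\dots,M_k$ (so $k=\Delta+2$, which is polynomial in the input size). It produces in polynomial time a coloring of $G\cap M$ using at most
\[
1+(\chi(M)-1)+\sum_{j=1}^{\Delta+1}\bigl(\chi(N_j)-1\bigr)\;\le\;1+(\chi(M)-1)+(\Delta+1)\;=\;\Delta+\chi(M)+1
\]
colors, as claimed (and when $E=\emptyset$ the same argument recovers the stronger bound $\chi(M)$). There is no substantial obstacle beyond the observation that graph‑independence factors through partition‑matroid constraints via edge coloring; the only points that require care are that a proper edge coloring of a general graph needs $\Delta+1$ rather than $\Delta$ colors — which is precisely why the additive term is $\Delta+1$ — and that the matroids $N_j$ must jointly cover every edge so that the common independent sets are exactly $\mathcal J\cap\mathcal I$.
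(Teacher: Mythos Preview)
Your proposal is correct and follows essentially the same approach as the paper: edge-color $G$ with $\Delta+1$ matchings via Vizing's theorem, turn each matching into a partition matroid with parts of size at most two, observe that the common independent sets of these partition matroids are exactly the independent sets of $G$, and then apply \Cref{thm:main}. Your write-up is slightly more careful (noting $\chi(N_j)\le 2$ rather than $=2$, and handling $E=\emptyset$), but the argument is the same.
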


\begin{proof}
We first explain how to replace $G$ by $\Delta+1$ partition matroids. 
By Vizing's theorem \cite{Vizing1964}, the edges in $G$ can be partitioned into $\Delta+1$ matchings
$S_1, \ldots, S_{\Delta+1}$, which can be computed in polynomial time. 
Now define a partition matroid $M_i= (V, \mathcal{I}_i)$
derived from $S_i$ in the following way: any vertex not matched in $S_i$ is in a singleton part,
and for each edge $(v, w) \in S_i$ there is a part consisting of exactly $v$ and $w$.
One can then observe that $\chi(G \cap M) = \chi\left(M \cap \left(\cap_{i=1}^{\Delta+1} M_i\right)\right)$,
and for all $i$, $\chi(M_i) =2$. Then the statement follows directly by applying Theorem \ref{thm:main}.
\end{proof}

Note that our algorithm for a general matroid, which underlies Theorem  \ref{thm:ourstrongcoloring}, is much
simpler than the algorithm for just a partition matroid, which underlies Theorem    \ref{thm:effectivecoloring} 
(although in fairness Theorem    \ref{thm:effectivecoloring} gives a better bound when the chromatic
number of the matroid is large).

We touch on a few additional related results here. 
A graph $G = (V, E)$ is strongly $k$-colorable if for all partition matroids $M$ on ground set $V$ with chromatic number $k$, $G \cap M$ has chromatic number $k$. The strong coloring number $s\chi(G)$ of a graph $G$ is the minimum $k$ s.t. $G$ is strongly $k$-colorable. This concept was introduced independently by Alon \cite{alon_sc} and Fellows \cite{fellows_sc}.  Haxell \cite{haxell_scn_asymp} also gave a nonconstructive proof that $s\chi(G) \leq 11/4 \cdot \Delta(G) + o(\Delta(G))$. The fractional strong coloring number is at most $2\Delta(G)$ when $M$ is a partition matroid \cite{scc} and even when $M$ is an arbitrary matroid \cite{ab06}. 

\section{Runtime Bound on Matroid Intersection Coloring Algorithm}\label{app:runtime}

In this section, we prove the following theorem.

\begin{theorem}\label{thm:runtime}
The Matroid Intersection Coloring Algorithm described in Section \ref{sec:alg} can be implemented in time $O(n^3T(M_1))$, where $n = |X|$ and $T(M_1)$ is the runtime of a single independence oracle call to the general matroid $M_1$.
\end{theorem}

\begin{proof}
We will prove that on each iteration of the algorithm, the Edmonds digraph $G$ of $M_1$ can be computed in time $O(n^2T(M_1))$, that the color-chordless subgraph $H$ of $G$ can be computed in time $O(n^2)$, and that the suffix-feasible path $P$ in $H$ can be computed in time $O(n^2)$. Thus each iteration takes time $O(n^2 T(M_1))$, and the algorithm has runtime $O(n^3 T(M_1))$.

The Edmonds digraph $G$ of $M_1$ can be constructed in time $O(n^2 T(M_1))$, because each pair of nodes requires an independence oracle call to the general matroid to determine whether an arc belongs between them. 

A naive construction of the color-chordless subgraph $H$ takes $O(n^3)$ time, but we can construct it in time $O(n^2)$ in the following way. The idea is to maintain a 2D array $A$ which stores, for each $x$ not already added to a layer, earliest-layer arcs $(y, x)$ from distinct colors. This allows us to perform only a single pass over the arcs during the entire construction of $H$. Specifically, we maintain an $n \times (\alpha+B)$ array $A$, a length $n$ array $S$, and a subset $W \subseteq X$ with the following properties before the construction of layer $L_i, i \geq 0$.

\begin{itemize}
    \item For each $x \notin L_0 \cup L_1 \cup \dots \cup L_{i-1}$ and $j \in [\alpha+B]$, $A[x][j] = y$ if $(y, x) \in A$, $y = j$ or $c(y) = j$, $y \in L_0 \cup L_1 \cup \dots \cup L_{i-1}$, and $y$ is from an earliest layer with these properties. If no such $y$ exists, then $A[x][j] = 0$.
    \item For each $x \notin L_0 \cup L_1 \cup \dots \cup L_{i-1}$, $S[x]$ is the number of nonzero entries $A[x][j]$ over $j \in [\alpha+B]$.
    \item $W = \{x \notin L_0 \cup L_1 \cup \dots \cup L_{i-1} : S[x] \geq B+1\}$
\end{itemize}

Initialize $A = 0, S = 0, W = \emptyset$ before the construction of layer $L_0$. To construct layer $L_i, i \geq 1$, set $L_i \leftarrow W$ (set $L_0 \leftarrow [\alpha+B]$). Then for each $x \in W$, iterate over $A[x][j]$ for $j \in [\alpha+B]$, and add $B+1$ distinct arcs $(y, x)$ where $A[x][j] = y$. To update $A, S, W$, set $W \leftarrow \emptyset$. Iterate over $x \in L_i$, and suppose $j$ is the color of $x$. For each arc $(x, z) \in A$, if $z \notin L_0 \cup L_1 \cup \dots \cup L_i$ and $A[z][j] = 0$, update $A[z][j] = x$ and increment $S[z]$. If $S[z] \geq B+1$, add $z$ to $W$.

It is clear from the algorithm description that if $A, S, W$ have their invariants, the construction of $L_i, i \geq 0$ is done correctly. Further, $A, S, W$ maintain their invariants. Thus the above algorithm correctly constructs the color-chordless subgraph $H$.

To bound the runtime, observe that construction of layer $L_i, i \geq 0$ takes $O(|L_i| \cdot (\alpha+B)) = O(|L_i|n)$ time, and updating of $A, S, W$ involves a single pass over the outgoing arcs from $L_i$. Thus the total runtime is $O\left(\sum_i|L_i|n\right) + O(|A|) = O(n^2)$.

The suffix-feasible path $P$ in $H$ can be constructed in time $O(n^2)$. Note that $P$ has length at most $n$. To extend the current path $P_j = (x_{\ell-j}, x_{\ell-j+1}, \dots, x_{\ell} = u)$ to $P_{j+1}$, we follow the proof of Lemma \ref{lemma:suffix-feasible}. Compute the set $R \subseteq [\alpha+B]$ of color classes which are at the capacity constraint in $c' = c \Delta P_j$ in $x_{\ell-j}$'s part in some partition matroid. Then iterate over the incoming arcs to $x_{\ell-j}$ in $H$ and find an element $z$ s.t. $(z, x_{\ell-j}) \in H$ and $z$ does not have a color in $R$. To extend $P_j$ to $P_{j+1}$ requires $O(n)$ time to construct $R$ and $O(n)$ time to iterate over the arcs incoming to $x_{\ell-j}$. Thus it takes $O(n)$ time to extend $P_j$ to $P_{j+1}$, so it takes $O(n^2)$ time to construct $P$.

Thus on each iteration, it takes $O(n^2T(M_1))$ time to construct $G$, $O(n^2)$ time to construct $H$, and $O(n^2)$ time to construct $P$, so it takes $O(n^2T(M_1))$ time in total. Thus the algorithm runs in time $O(n^3T(M_1))$.
\end{proof}

\section{Coloring a Single Matroid on an Example}
\label{app:generalizededmondsexample}

Consider the following matroid $M$ in Figure \ref{fig:sec2_matroid} and current $2$-coloring of elements $c$ in Figure \ref{fig:sec2_init_coloring}. The construction of the Edmonds digraph $G$ is shown in Figure \ref{fig:sec2_G1G2G}. An example shortest source-sink $P$ and the resulting coloring $c \Delta P$ are shown in Figure \ref{fig:sec2_shortest_path}. An example color-chordless source-sink path $P$ and the resulting coloring $c \Delta P$ are also shown in Figure \ref{fig:sec2_shortest_path}. 

\begin{figure}[h]
\begin{center}
\begin{subfigure}[t]{.3\textwidth}
\includegraphics[width=\linewidth]{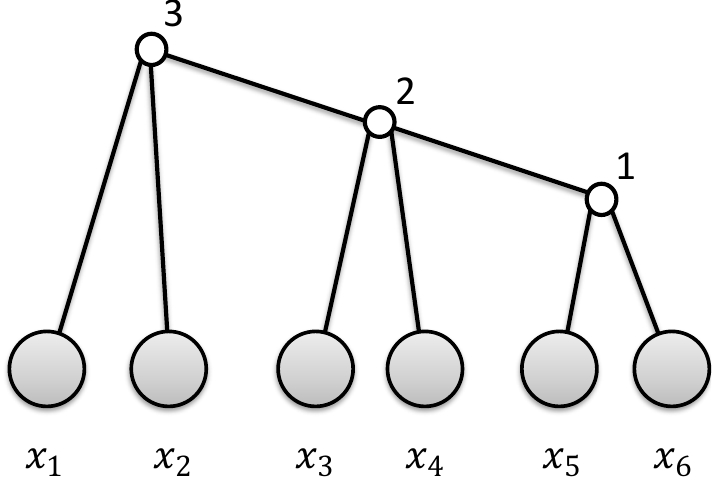}
\caption{}
\label{fig:sec2_matroid}
\end{subfigure}%
\qquad
\begin{subfigure}[t]{.3\linewidth}
\includegraphics[width=\linewidth]{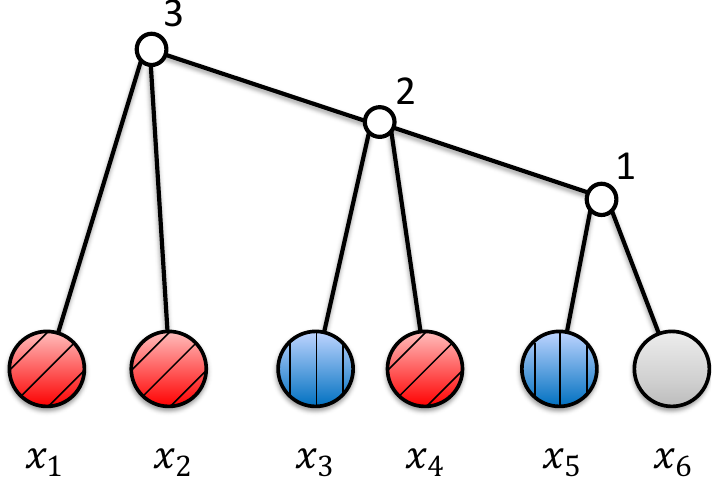}
\caption{}
\label{fig:sec2_init_coloring}
\end{subfigure}
\caption{(\ref{fig:sec2_matroid}) A laminar matroid  and (\ref{fig:sec2_init_coloring}) a corresponding partial coloring.  
}
\end{center}
\end{figure}

The example matroid in Figure \ref{fig:sec2_matroid} is a 2-colorable laminar matroid (see Definition \ref{def:laminar}). The ground set is $X = \{x_1, x_2, x_3, x_4, x_5, x_6\}$ with laminar family $\mathcal{F} = \{\{x_5, x_6\}, \{x_3, x_4, x_5, x_6\}, \{x_1, x_2, x_3, x_4, x_5, x_6\}\}$. The set $\{x_5, x_6\}$ has capacity $1$, $\{x_3, x_4, x_5, x_6\}$ has capacity $2$, and $\{x_1, x_2, x_3, x_4, x_5, x_6\}$ has capacity $3$.  Figure \ref{fig:sec2_init_coloring} shows a partial $2$-coloring of elements in the laminar matroid. Color $1$ is blue (vertical hatches), and color $2$ is red (diagonal hatches). We have $c(x_6) = 0, c(x_3) = c(x_5) = 1$, and $c(x_1) = c(x_2) = c(x_4) = 2$. $S_1 = \{x_3, x_5\}, S_2 = \{x_1, x_2, x_4\}$, and $U = \{x_6\}$. Note $S_1, S_2 \in \mathcal{I}$.

Next, we illustrate the construction of the Edmonds digraph. Figure \ref{fig:a1} shows edges in $A_1$. Note $S_1 \cup \{x_1\} = \{x_1, x_3, x_5\} \in \mathcal{I}$, so there is an arc $(1, x_1) \in A_1$. Figure \ref{fig:a2} shows the edges $A_2$. Note $S_2 \cup \{x_3\} = \{x_1, x_2, x_3, x_4\} \notin \mathcal{I}$ but $S_2 - \{x_4\} \cup \{x_3\} = \{x_1, x_2, x_3\} \in \mathcal{I}$, so there is an arc $(x_4, x_3) \in A_2$. Finally, Figure \ref{fig:edmondgraph} shows the entire graph $G$. Note $V(G) = \{1, 2\} \cup X$ and $A = A_1 \cup A_2$.

\begin{figure}
\begin{center}
\begin{subfigure}[t]{.27\textwidth}
\includegraphics[width=\textwidth]{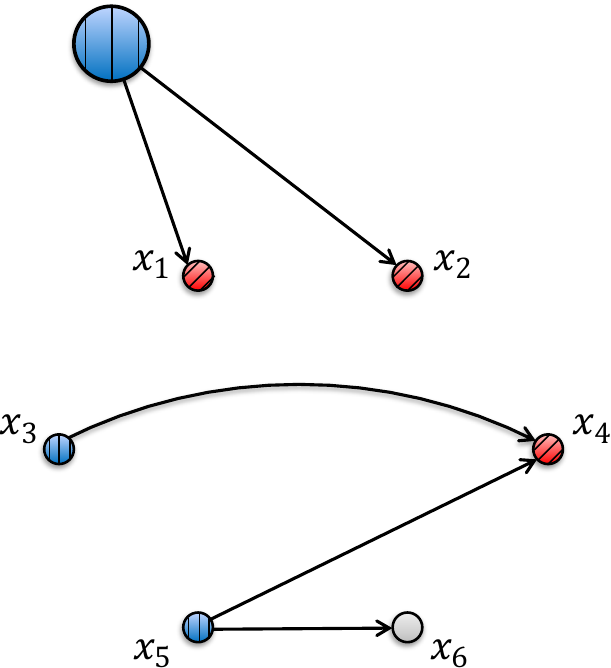}
\caption{}
\label{fig:a1}
\end{subfigure}
\begin{subfigure}[t]{.27\textwidth}
\includegraphics[width=\textwidth]{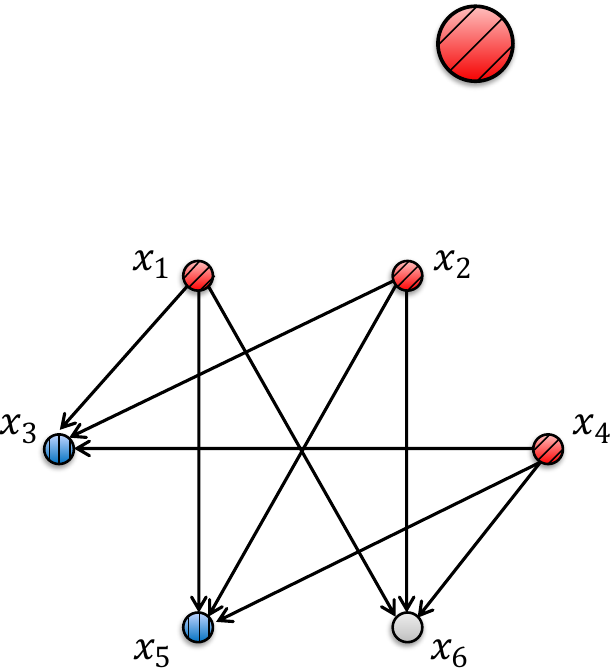}
\caption{}
\label{fig:a2}
\end{subfigure}
\begin{subfigure}[t]{.27\textwidth}
\includegraphics[width=\textwidth]{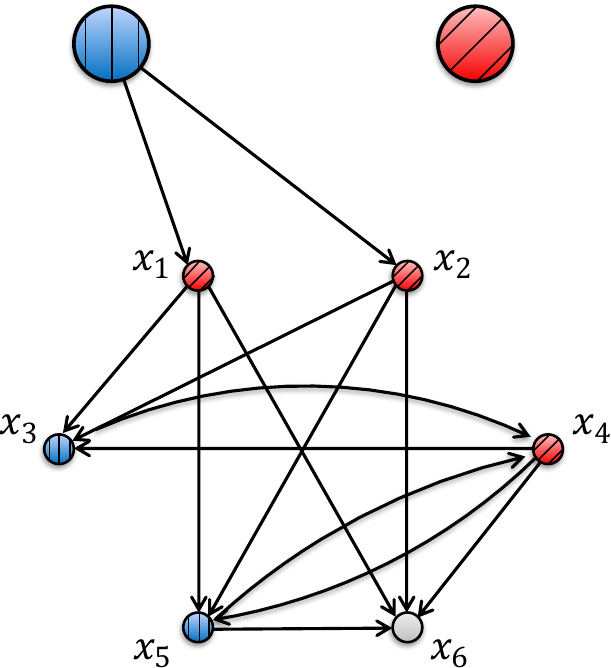}
\caption{}
\label{fig:edmondgraph}
\end{subfigure}
\end{center}
\caption{(\ref{fig:a1}) is the edges in $A_1$, (\ref{fig:a2}) is the edges in $A_2$, and (\ref{fig:edmondgraph}) is the Edmonds digraph.}
\label{fig:sec2_G1G2G}
\end{figure}

Generalized Edmonds' algorithm then identifies a color-chordless source-sink path. First we show a path Edmonds' algorithm could identify and then we show a path that is only possible in our generalization of Edmonds' algorithm. Figure \ref{fig:posiblepath} shows a possible \emph{shortest} source-sink path $P = (1, x_2, x_6)$ (a path which can be used in the original definition of Edmonds' algorithm). Next, Figure \ref{fig:updatecolor} shows the updated coloring $c \Delta P$ according to the path. Note $c \Delta P$ is a feasible coloring in $M$, because $\{x_2, x_3, x_5\}, \{x_1, x_4, x_6\} \in \mathcal{I}$.

\begin{figure}[h]
\begin{center}
\begin{subfigure}[t]{.22\textwidth}
\includegraphics[width=\textwidth]{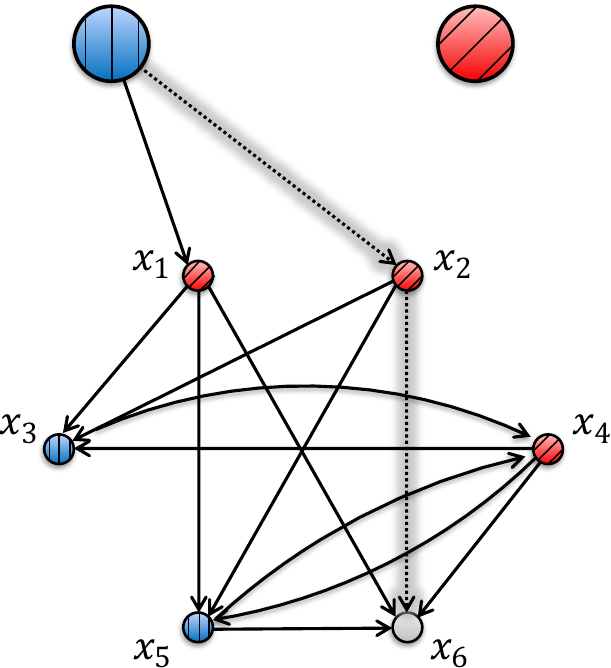}
\caption{}
\label{fig:posiblepath}
\end{subfigure}
\begin{subfigure}[t]{.22\textwidth}
\includegraphics[width=\textwidth]{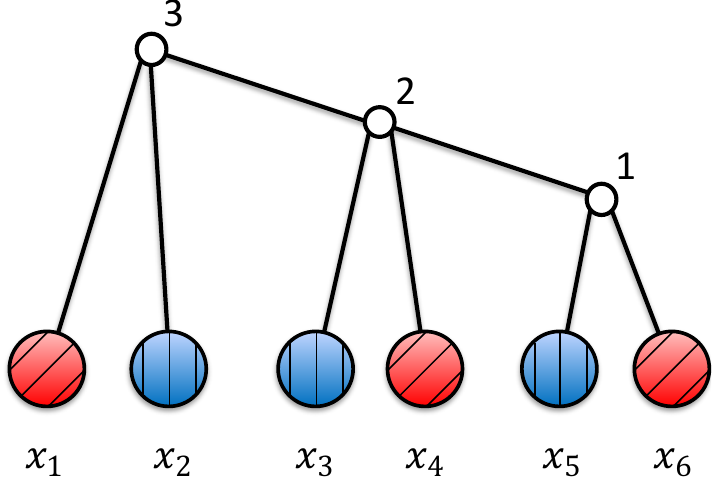}
\caption{}
\label{fig:updatecolor}
\end{subfigure}
\begin{subfigure}[t]{.22\textwidth}
\includegraphics[width=\textwidth]{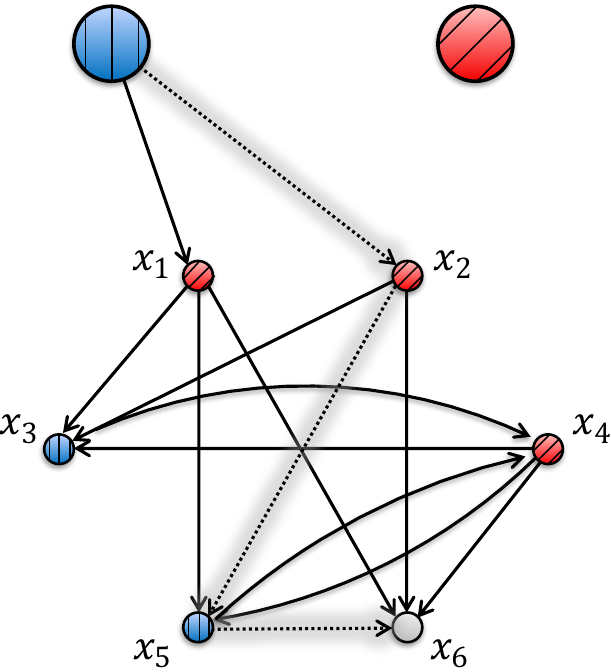}
\caption{}
\label{fig:genpath}
\end{subfigure}
\begin{subfigure}[t]{.22\textwidth}
\includegraphics[width=\textwidth]{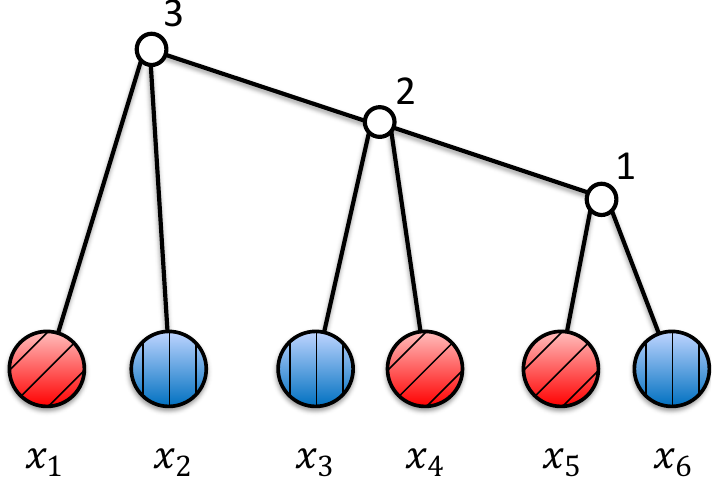}
\caption{}
\label{fig:genpathupdate}
\end{subfigure}
\end{center}
\caption{ (\ref{fig:posiblepath}) is a shortest source-sink path and (\ref{fig:updatecolor}) is the corresponding updated coloring. (\ref{fig:genpath}) is a color-chordless source-sink path and (\ref{fig:genpathupdate}) is the corresponding updated coloring.}
\label{fig:sec2_shortest_path}
\end{figure} 

Figure \ref{fig:genpath} shows a possible non-shortest color-chordless source-sink path $P = (1, x_2, x_5, x_6)$ and Figure \ref{fig:genpathupdate} shows the corresponding updated coloring $c\Delta P$. This is the kind of path that can be identified in our
generalization of Edmonds' algorithm. Note $c \Delta P$ is a feasible coloring in $M$, because $\{x_2, x_3, x_6\}, \{x_1, x_4, x_5\} \in \mathcal{I}$.

\section{Coloring a Matroid Intersection on an Example}
\label{app:intersectionexample}

Consider the following matroids $M_1, M_2$ and current $3$-coloring of elements $c$ in Figure \ref{fig:sec3_init_coloring}. $M_1$ is the laminar matroid from the previous example and $M_2$ is a partition matroid.   Color $1$ is blue (vertical hatches), color $2$ is red (diagonal hatches), and color $3$ is green (horizontal hatches). $c(x_6) = 0, c(x_5) = 1, c(x_3) = c(x_4) = 2$, and $c(x_1) = c(x_2) = 3$. $S_1 = \{x_5\}, S_2 = \{x_3, x_4\}, S_3 = \{x_1, x_2\}$, and $U = \{x_6\}$. Note $S_1, S_2, S_3 \in \mathcal{I}_1 \cap \mathcal{I}_2$. $\chi(M_1) = \chi(M_2) = 2$, so $\alpha = 2$ and $B = 1$.

\begin{figure}[h]
\begin{center}
\begin{subfigure}[t]{0.3\textwidth}
\includegraphics[width=\textwidth]{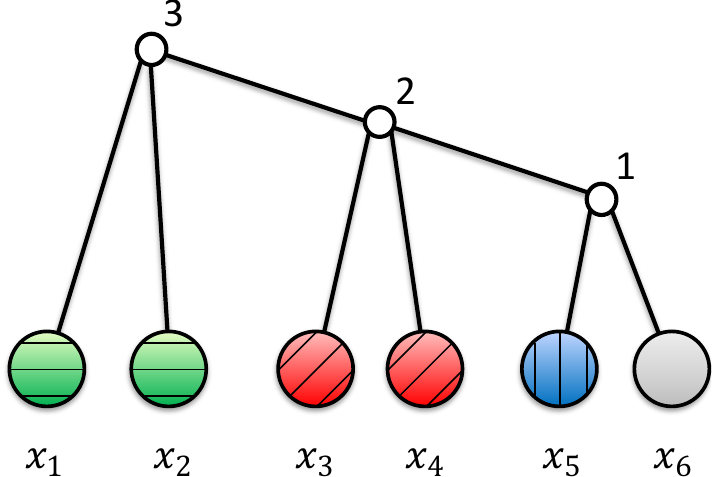}
\caption{}
\label{fig:sec3_laminar}
\end{subfigure}
\qquad
\begin{subfigure}[t]{.3\textwidth}
\includegraphics[width=\textwidth]{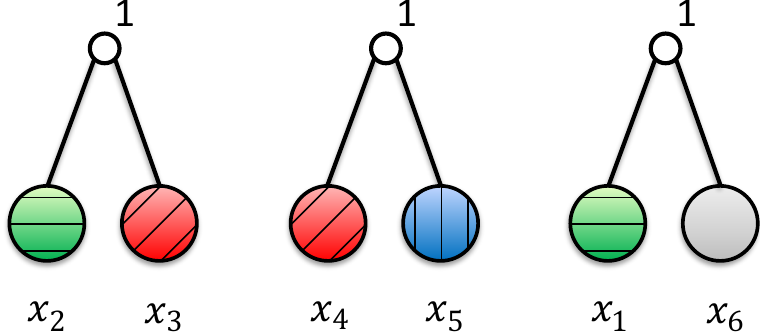}
\caption{}
\label{fig:sec3_partition}
\end{subfigure}
\end{center}
\caption{(\ref{fig:sec3_laminar}) is the laminar matroid $M_1$ and (\ref{fig:sec3_partition}) is the partition matroid $M_2$.}
\label{fig:sec3_init_coloring}
\end{figure}

The construction of the Edmonds digraph $G$ of $M_1$ is shown in Figure \ref{fig:sec3_G1G2G3G}.  Note $S_1 \cup \{x_1\} = \{x_1, x_5\} \in \mathcal{I}_1$, so there is an arc $(1, x_1) \in A_1$. Note $S_2 \cup \{x_6\} = \{x_3, x_4, x_6\} \notin \mathcal{I}_1$ but $S_2 - \{x_3\} \cup \{x_6\} = \{x_4, x_6\} \in \mathcal{I}_1$, so there is an arc $(x_3, x_6) \in A_2$. Note $S_3 \cup \{x_4\} = \{x_1, x_2, x_4\} \in \mathcal{I}_1$, so there is an arc $(3, x_4) \in A_3$.

\begin{figure}[h]
\begin{center}
\begin{subfigure}[t]{0.2\textwidth}
\includegraphics[width=\textwidth]{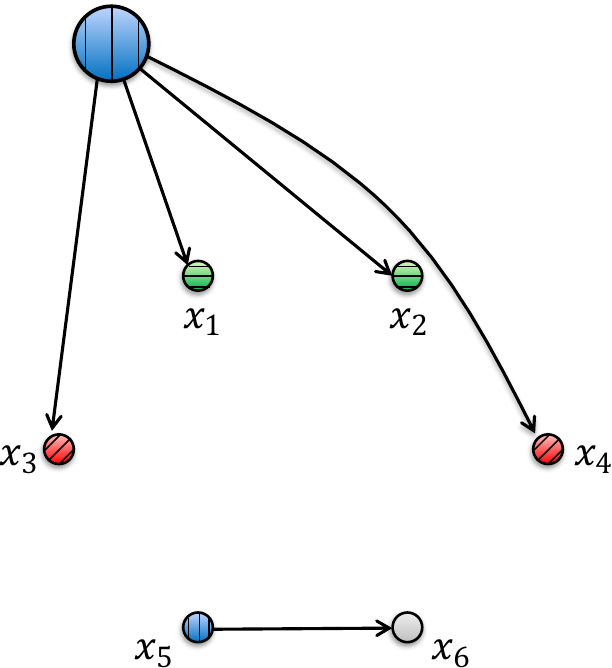}
\caption{}
\label{fig:sec3_G2.1}
\end{subfigure}
\hspace{.5cm}
\begin{subfigure}[t]{0.2\textwidth}
\includegraphics[width=\textwidth]{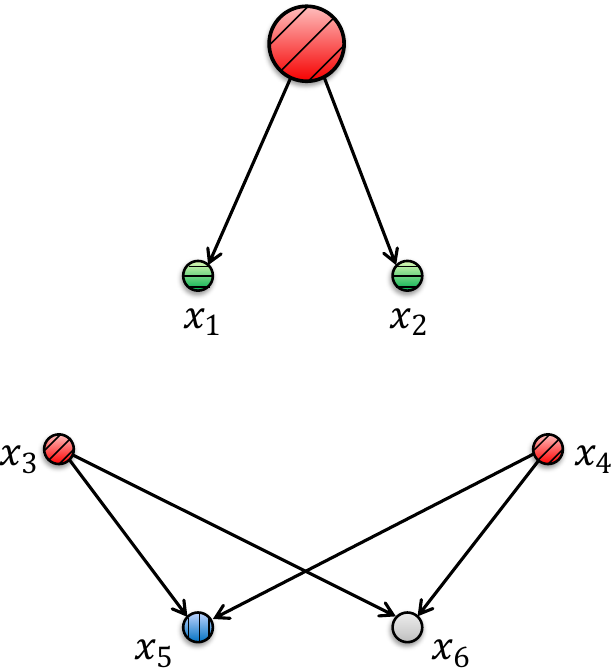}
\caption{}
\label{fig:sec3_G2.2}
\end{subfigure}
\hspace{0.5cm}
\begin{subfigure}[t]{0.2\textwidth}
\includegraphics[width=\textwidth]{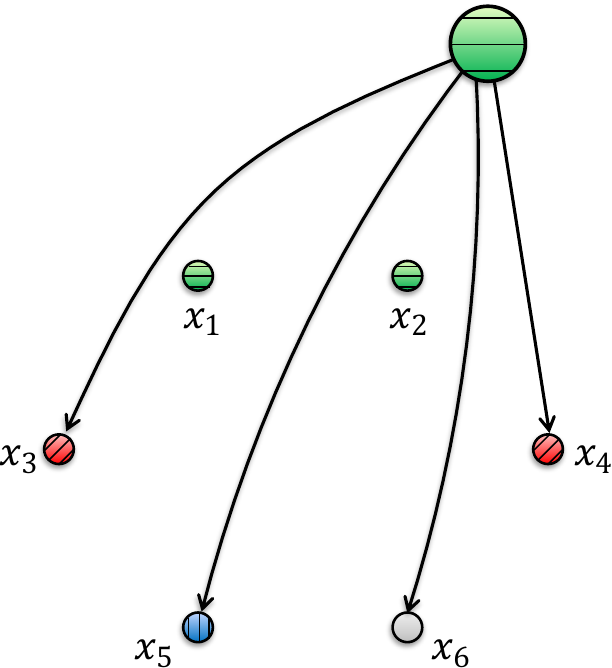}
\caption{}
\label{fig:sec3_G2.3}
\end{subfigure}
\hspace{0.5cm}
\begin{subfigure}[t]{0.2\textwidth}
\includegraphics[width=\textwidth]{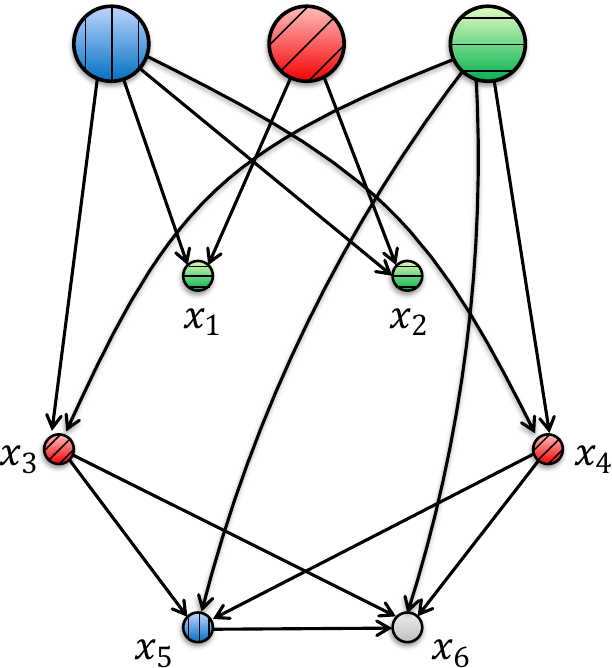}
\caption{}
\label{fig:sec3_G2.0}
\end{subfigure}
\end{center}
\caption{(\ref{fig:sec3_G2.1}) is edges $A_1$, (\ref{fig:sec3_G2.2}) is edges $A_2$, (\ref{fig:sec3_G2.3}) is edges $A_3$, and (\ref{fig:sec3_G2.0}) is the Edmonds digraph $G$.}
\label{fig:sec3_G1G2G3G}
\end{figure}

The construction of the color-chordless subgraph $H$ is shown in Figure \ref{fig:sec3_H}. Observe that the arcs are directed down the levels, the nodes not on $L_0$ have in-degree $2$, and the incoming arcs to nodes not on $L_0$ have different source colors. Observe that $x_5$ and $x_6$ are not included on $L_1$ because, although they have an incoming arc from the Green $= 3$ node, they do not have two incoming arcs from $L_0$. Note that on $L_2$, the incoming arcs to $x_5, x_6$ could be from either $x_3$ or $x_4$. 

The construction of the suffix-feasible path $P$ with respect to $M_2$ is shown in Figure \ref{fig:sec3_P1} and Figure \ref{fig:sec3_P2}. Initially, $P_0 = x_6$. Figure \ref{fig:sec3_P1} shows the suffix $P_1 = (x_3, x_6)$ and the feasible coloring $c\Delta P_1$ in $M_2$. Note that $P_1' = (3, x_6)$ is \textit{not} suffix-feasible with respect to $M_2$, because $c \Delta P_1'$ is not a feasible coloring in $M_2$ (both $x_1, x_6$ are Green in $c \Delta P_1'$). Figure \ref{fig:sec3_P2} shows the suffix $P_2 = (1, x_3, x_6)$, which is the complete suffix-feasible path $P$, and the feasible coloring $c\Delta P_2$ in $M_2$. Note that $P_2' = (3, x_3, x_6)$ is \textit{not} suffix-feasible with respect to $M_2$, because $c \Delta P_2'$ is not a feasible coloring in $M_2$ (both $x_2, x_3$ are Green in $c \Delta P_2'$).

\begin{figure}[h]
\begin{center}
\begin{subfigure}[t]{.25\textwidth}
\includegraphics[width=\textwidth]{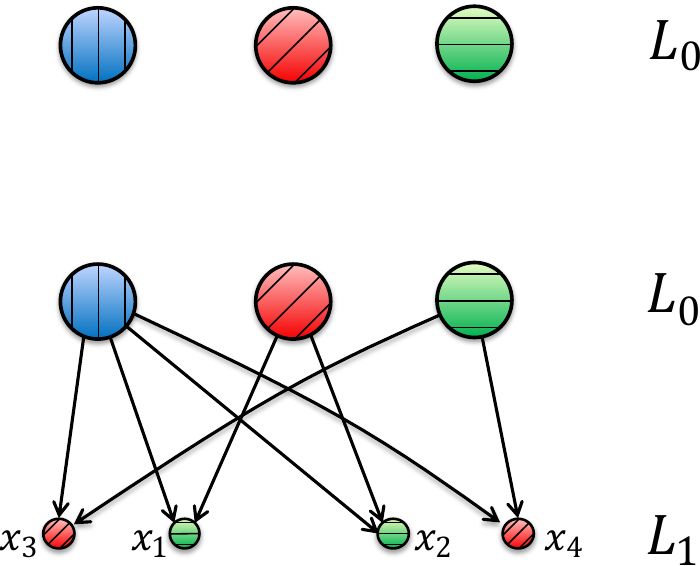}
\end{subfigure}
\begin{subfigure}[t]{.25\textwidth}
\includegraphics[width=\textwidth]{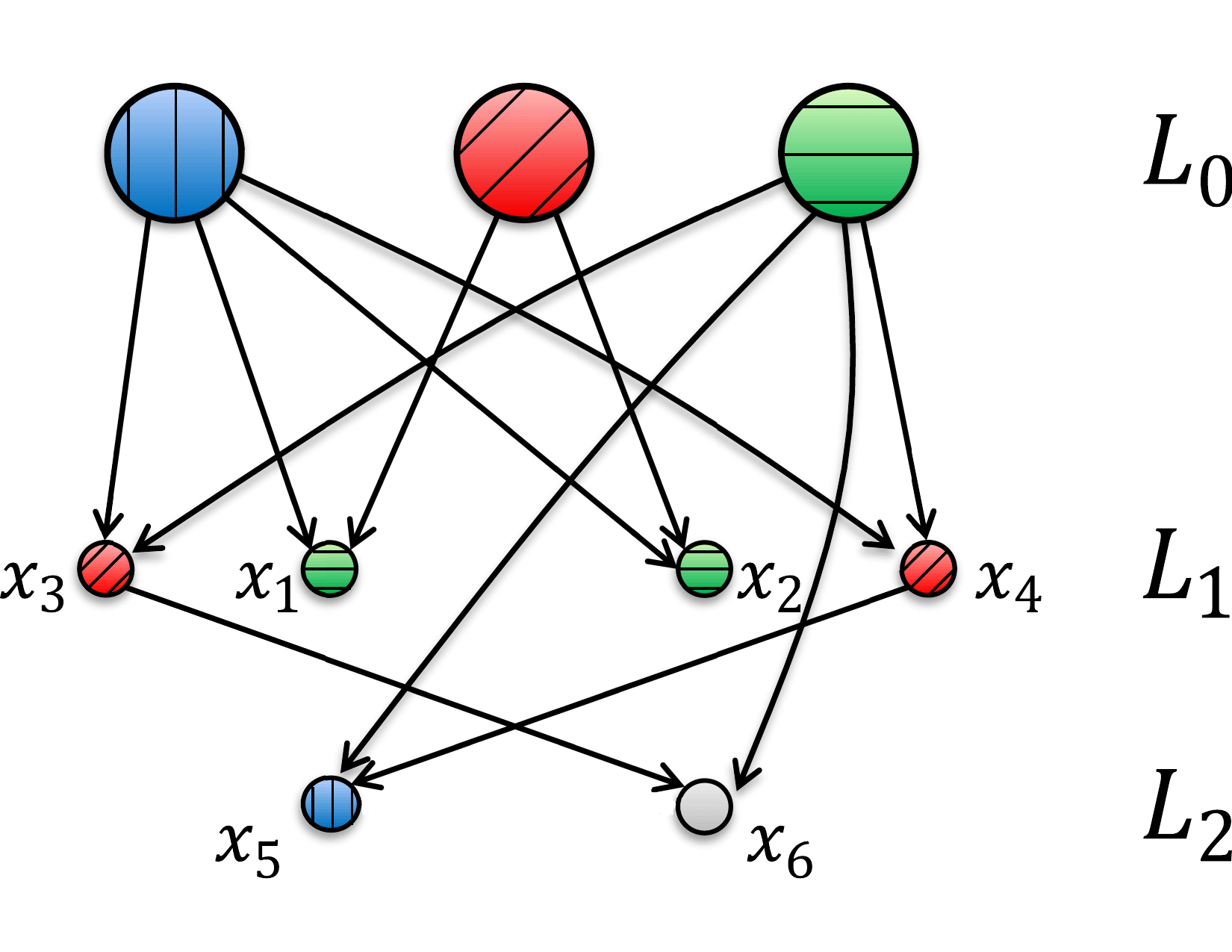}
\end{subfigure}
\end{center}
\caption{Constructing the color-chordless subgraph $H$ of $G$.}
\label{fig:sec3_H}
\end{figure}

\begin{figure}[h]
\centering
\begin{subfigure}[t]{.3\textwidth}
\includegraphics[width=\textwidth]{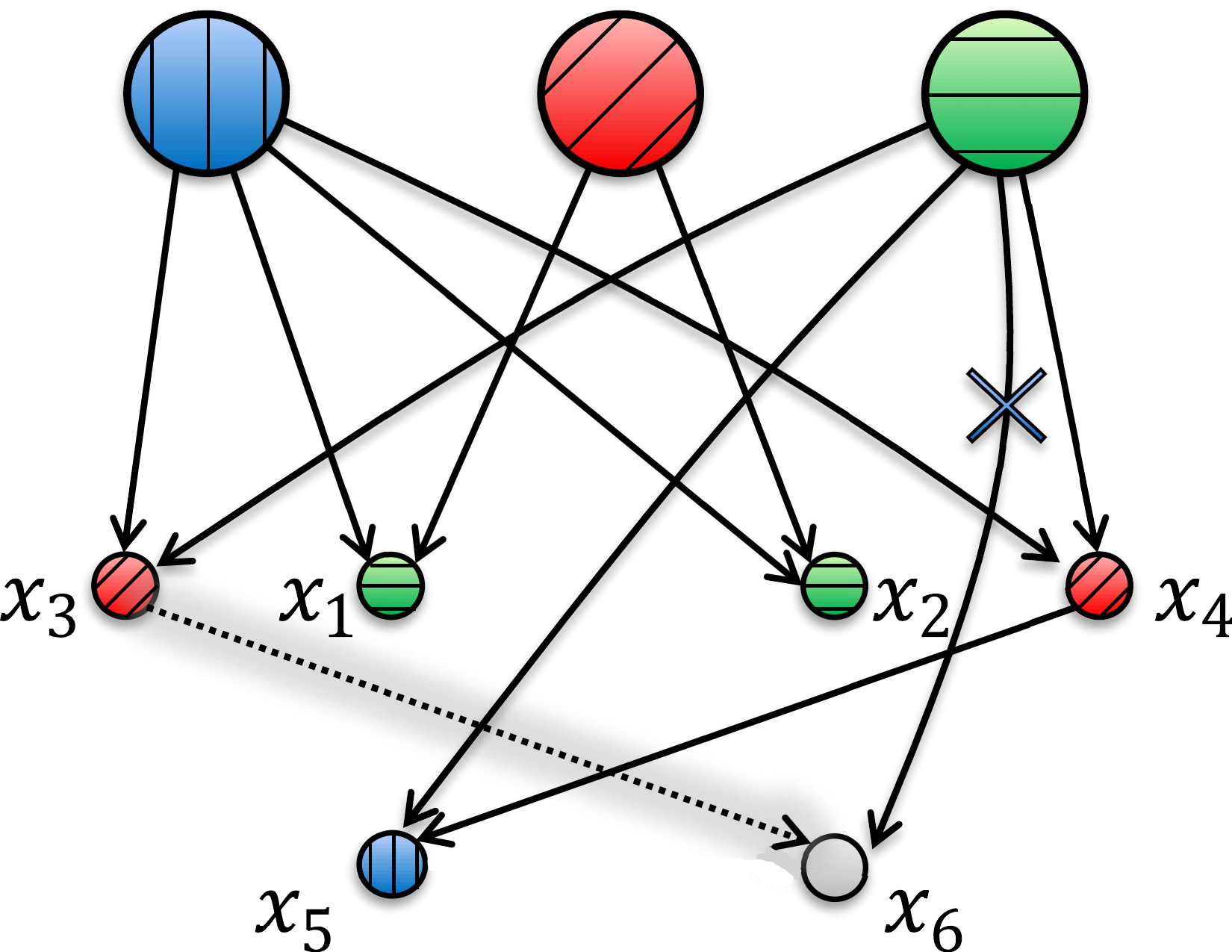}
\caption{}
\label{fig:sec3_path1_H}
\end{subfigure}
% \vspace{0.2cm}
\qquad
\begin{subfigure}[t]{.3\textwidth}
\includegraphics[width=\textwidth]{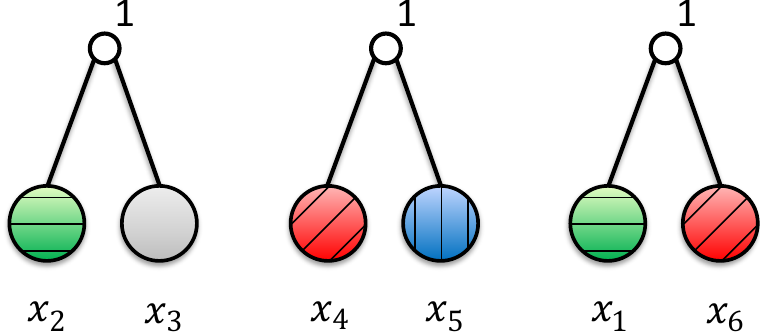}
\caption{}
\label{fig:sec3_path1_partition}
\end{subfigure}
\caption{(\ref{fig:sec3_path1_H}) is the suffix $P_1$. (\ref{fig:sec3_path1_partition}) is $c\Delta P_1$ on $M_2$.}
\label{fig:sec3_P1}
\end{figure}

\begin{figure}
\centering
\begin{subfigure}[t]{.3\textwidth}
\includegraphics[width=\textwidth]{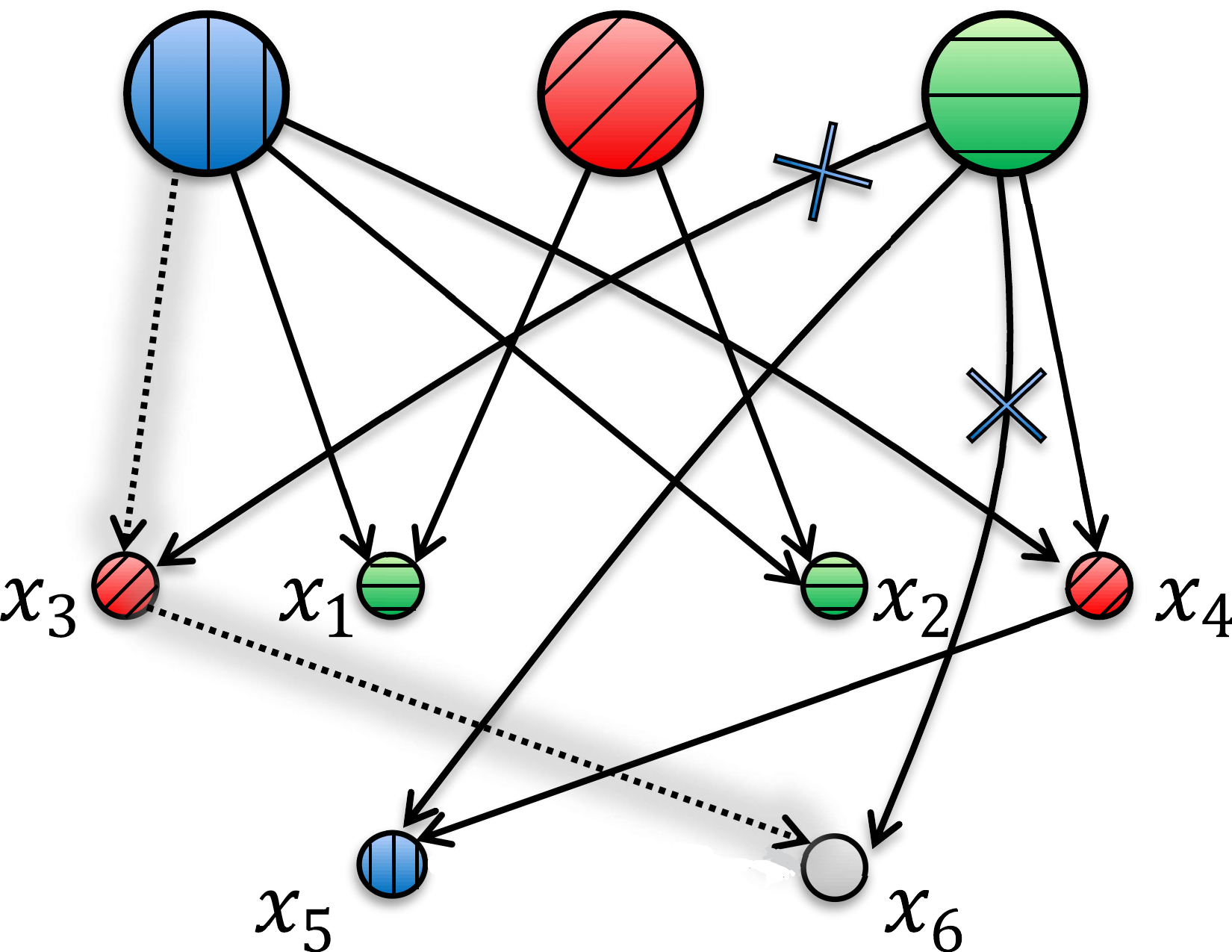}
\caption{}
\label{fig:sec3_path2_H}
\end{subfigure}
\qquad
\begin{subfigure}[t]{.3\textwidth}
% \vspace{0.2cm}
\includegraphics[width=\textwidth]{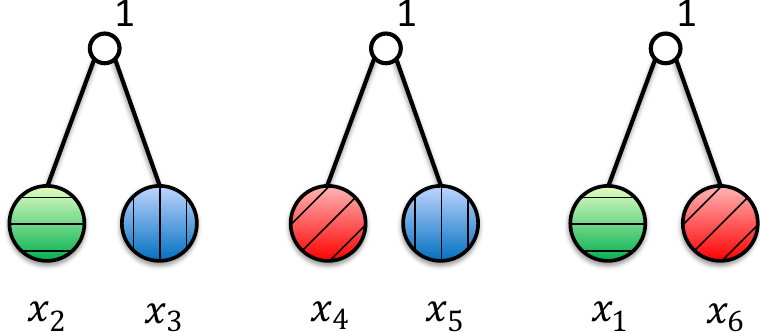}
\caption{}
\label{fig:sec3_path2_partition}
\end{subfigure}
\caption{(\ref{fig:sec3_path2_H}) is the suffix $P_2$ (the final path $P$). (\ref{fig:sec3_path2_partition}) is $c\Delta P_2$ on $M_2$.}
\label{fig:sec3_P2}
\end{figure}

\section{Hardness}\label{app:hardness}

This section shows that it is NP-Hard to approximate the coloring of $k$ \emph{partition} matroids to a $k^{1-\epsilon}$ factor for any $\epsilon >0$ when $k$ is polynomial in the number of elements. We will show an approximation-preserving reduction from vertex coloring a graph with maximum degree $k$  to coloring the intersection of $k+1$ partition matroids. This is the same reduction that is given in the proof of Theorem \ref{thm:ourstrongcoloring}.

\begin{theorem}\label{thm:hardness}
Consider the following problems:
\begin{enumerate}
\item[(A)] Coloring the intersection of $k+1$ partition matroids on a common ground set, where each part has size at most $2$ and capacity $1$.
\item[(B)] Vertex coloring graphs with maximum degree at most $k$.
\end{enumerate}
There is a polynomial-time approximation-preserving reduction showing a $c$-approximation algorithm for problem (A) yields a $c$-approximation algorithm for problem (B).
\end{theorem}

\begin{proof}

\noindent\textbf{Reducing (B) to (A).}
Let $G = (V,E)$ be a graph with maximum degree at most $k$. By Vizing's Theorem~\cite{Vizing1964}, the edges of $G$ can be decomposed into $k+1$ disjoint matchings $S_1, S_2, \dots, S_{k+1}$.

For each $i \in [k+1]$, we construct a partition matroid $M_i = (V, \mathcal{I}_i)$ as follows:
\begin{itemize}
\item For each edge $\{u,v\} \in S_i$, create a part $\{u,v\}$ with capacity $1$.
\item For each vertex $v \in V$ not covered by any edge in $S_i$, create a singleton part $\{v\}$ with capacity $1$.
\end{itemize}

Since the parts partition $V$ and each has capacity $1$, this defines a valid partition matroid. Moreover, each part has size at most $2$.

\begin{claim} A set $I \subseteq V$ is independent in $\bigcap_{i=1}^{k+1} M_i$ if and only if $I$ is an independent set of vertices in $G$.
\end{claim}

\begin{proof}
($\Rightarrow$) Suppose $I \in \bigcap_{i=1}^{k+1} \mathcal{I}_i$. Let $\{u,v\} \in E$ be an arbitrary edge. Since $S_1, \ldots, S_{k+1}$ partition $E$, there exists some $j \in [k+1]$ such that $\{u,v\} \in S_j$. Thus $\{u,v\}$ forms a part in matroid $M_j$. Since $I$ is independent in $M_j$ and this part has capacity $1$, we have $|I \cap \{u,v\}| \leq 1$. Therefore, $u$ and $v$ cannot both be in $I$. Since $\{u,v\}$ was arbitrary, $I$ contains no edge of $G$, so $I$ is an independent set of vertices in $G$.

($\Leftarrow$) Suppose $I$ is an independent set of vertices in $G$. Fix any $i \in [k+1]$ and consider an arbitrary part $P$ in $M_i$. If $P$ is a singleton, then trivially $|I \cap P| \leq 1$. If $P = \{u,v\}$ for some edge $\{u,v\} \in S_i \subseteq E$, then since $I$ is independent in $G$, at most one of $u,v$ is in $I$, so $|I \cap P| \leq 1$. Since this holds for all parts in all matroids, $I \in \bigcap_{i=1}^{k+1} \mathcal{I}_i$.
\end{proof}

Since the independent sets of $G$ and $\bigcap_{i=1}^{k+1} M_i$ coincide, their chromatic numbers are equal:
$$\chi(G) = \chi\left(\bigcap_{i=1}^{k+1} M_i\right).$$

\medskip
\noindent\textbf{Approximation preservation.}
The reduction can be computed in polynomial time. Moreover, given a feasible coloring for the matroid problem, the same coloring is a feasible coloring for the corresponding instance of the graph problem (since the independent sets coincide). 
\end{proof}

There is a trivial polynomial-time approximation-preserving reduction from coloring graphs on $k$ vertices to coloring graphs on $n = \text{poly}(k)$ vertices with maximum degree $k-1$: simply add $n-k$ singleton vertices to the original graph.

\cite{Zuckerman2006LinearDegreeExtractors} shows that it is NP-hard to approximate the chromatic number of a graph on $k$ vertices within a factor of $k^{1-\epsilon}$. Thus it is NP-hard to approximate the chromatic number of a graph on $n$ vertices with maximum degree $k = \text{poly}(n)$ within a factor of $k^{1-\epsilon}$. Finally, we obtain the following corollary for the hardness of matroid intersection coloring.

\begin{corollary}\label{cor:hardness}
For any $\epsilon > 0$, it is NP-hard to approximate the chromatic number of the intersection of $k$ partition matroids within a factor of $k^{1-\epsilon}$ when $k$ is polynomial in the number of elements.
\end{corollary}

Corollary \ref{cor:hardness} shows that a polynomial-time $O(k)$-approximation for matroid intersection coloring is essentially best possible when the number of matroids is large compared to the number of elements. It is possible one could obtain an $o(k)$-approximation when $k = \text{polylog}(n)$, for example, but one is still bounded by more complicated vertex coloring hardness results (see \cite{Khot2001ImprovedInapproximability}).

\section{Matroid Intersection List Coloring Algorithm}\label{app:lc}

In this section, we extend our matroid intersection coloring algorithm from Section \ref{sec:alg} to list coloring. We then prove the following strengthened version of Theorem \ref{thm:main}.

\begin{theorem}\label{thm:main_lc}
There is a polynomial-time algorithm that, given a general matroid $M_1 = (X, \mathcal{I}_1)$ 
   and $k-1$ partition matroids $ M_2, \ldots, M_k$, a positive integer $m \geq 1+\sum_{i=1}^k \left(\chi(M_i) -1\right)$, and lists of positive integers $L_x$ where $L_x \subseteq [m]$ and $|L_x| = 1+\sum_{i=1}^k \left(\chi(M_i) -1\right)$ for each $x \in X$, will produce a feasible coloring of the intersection $M = \cap_{i=1}^k M_i$ such that each $x \in X$ is given a color in $L_x$.
\end{theorem}

Note that plugging in $m = 1+\sum_{i=1}^k \left(\chi(M_i) -1\right)$ and therefore $L_x = [m]$ for each $x \in X$ gives Theorem \ref{thm:main} as a corollary.

\subsection{Algorithm Description }
\label{subsec:lc_intersectionalgorithmdescription}

Our Matroid Intersection List Coloring Algorithm has the following input, output, and outer loop invariant.

\textbf{Input:} An arbitrary matroid $M_1$ and partition matroids $M_i$ for $2 \leq i \leq k$ on common ground set $X$. A positive integer $m \geq \alpha + B$ for $\alpha = \chi(M_1)$ and $B = \sum_{i=2}^k \left(\chi(M_i)-1\right)$. Lists $L_x \subseteq [m]$ of positive integers where $|L_x| = \alpha+B$ for each $x \in X$.

\textbf{Output:} An $m$-coloring $c$ of $M = \cap_{i=1}^k M_i$ s.t. $c(x) \in L_x$ for each $x \in X$.

\textbf{Outer Loop Invariant:} After $i$ iterations, $i$ elements of the ground set are list-feasibly colored. A coloring $c$ is \textit{list-feasible} if its color classes are independent sets in all of the matroids and $c(x) \in L_x$ for each colored element $x$.

\medskip
\noindent \textbf{Description of our  Matroid Intersection Coloring Algorithm:} Iteratively update the current coloring $c$ along an arbitrary  source-sink path $P$, in the color-chordless subgraph $H$ of the current \textit{pruned} Edmonds digraph $G'$ of $M_1$, with the property that $P$ is suffix-feasible with respect to each of the partition matroids $M_2, \ldots, M_k$. 

To complete the description of our algorithm, we need to define the pruned Edmonds $G'$ of $M_1$. $H$ and $P$ are defined in the same way as before, but now with respect to $G'$.

\begin{definition}[Pruned Edmonds Digraph]
The pruned Edmonds digraph $G'$ is obtained from the Edmonds digraph $G$ of $M_1, c$ by iterating over the elements $x \in X$, and removing any arcs from $G$ of the form $(i, x) \in A$ for integer $i \notin L_x$ and $(y, x) \in A$ from $G$ where $c(y) \notin L_x$.
\end{definition}

We still use $G$ to refer to the Edmonds digraph of $M_1, c$.

\subsection{Algorithm Analysis}
In order to prove the correctness of our algorithm, we will prove the following statements. In Lemma \ref{lemma:list_H_paths}, we will show all source-sink paths in $H$ are color-chordless paths in $G'$. In Lemma \ref{lemma:list_main_struct}, we will show that $H$ contains an uncolored element $u$. In Lemma \ref{lemma:list_final}, we will tie everything together and show that our algorithm maintains a list-feasible coloring $c$ in $M_1, M_2, \dots, M_k$.

\begin{lemma}\label{lemma:list_H_paths}
All source-sink paths in $H$ are color-chordless paths in $G$.
\end{lemma}

\begin{proof}
All source-sink paths in $H$ are color-chordless paths in $G'$ by the same proof as Lemma \ref{lemma:feas_M1}. It suffices to show that all color-chordless paths in $G'$ are color-chordless paths in $G$. Let $P$ be a color-chordless path in $G'$. AFSOC $P$ is not color-chordless in $G$, i.e. there exists an arc $(y, x) \in P$ and a node $z$ preceding $y$ in $P$ s.t. $(z, x) \in A$ and $z, y$ have the same color. Because $(y, x) \in A'$, $c(y) \in L_x$. Because $P$ is color-chordless in $G'$, $(z, x) \notin A'$, so either $z \notin L_x$ or $c(z) \notin L_x$. But $z, y$ have the same color, a contradiction. Thus $P$ is color-chordless in $G$.
\end{proof}

\begin{lemma}\label{lemma:list_main_struct}
Let $(\overline{Y}, Y)$ be a source-sink separating cut in $G'$. Then there exists a subset of arcs $A' \subseteq \delta(\overline{Y}, Y)$ such that $|A'| > B \cdot |Y|$ and no two arcs in $A'$ share both the same target node $x$ and the same source node color $j$.
\end{lemma}

\begin{proof}
We will first show that many elements in $Y$ can change colors list-feasibly in $M_1$. Each one of these color changes induces a corresponding arc in $\delta(\overline{Y}, Y)$ by the claim in Lemma \ref{lemma:main_struct}.

Let $W_j = \{x \in X : j \in L_x\}$ for $j = 1, 2, \dots, m$. By assumption, $M_1$ is an $\alpha$-colorable matroid. 
Let $T_{1, j}, T_{2, j}, \dots, T_{\alpha, j}$ be the color classes of an $\alpha$-coloring of $M_1$ restricted to $W_j$ for $j = 1, 2, \dots, m$. Let $S_1, S_2, \dots, S_m$ be the color classes of our $m$-coloring $c$. Restrict all sets to $Y$, by letting $W_j' = W_j \cap Y$, $T_{i, j}' = T_{i, j} \cap Y$, and $S_j' = S_j \cap Y$ for $i \in [\alpha]$ and $j \in [m]$.

We now loop over $j \in [m]$, and show that many elements $x \in Y \setminus S_j'$ have $S_j' \cup \{x\} \in \mathcal{I}_1$ and $j \in L_x$. For each $j \in [m]$, and for each $i \in [\alpha]$, consider $S_j', T_{i,j}'$. There exist at least $|T_{i,j}'|-|S_j'|$ elements $x \in T_{i,j}' \setminus S_j' \subseteq Y \setminus S_j'$ such that $S_j' \cup \{x\} \in \mathcal{I}_1$ via the matroid exchange property (see Definition \ref{def:matroid}). Further, because $T_{i, j}' \subseteq W_j$, all of these elements $x$ have $j \in L_x$.

Thus, for each $j \in [m]$, the number of elements $x \in Y \setminus S_j'$ such that $S_j' \cup \{x\} \in \mathcal{I}_1$ and $j \in L_x$ is at least $\sum_{i=1}^\alpha (|T_{i,j}'| - |S_j'|)$. Summing over all $j \in [m]$, the number of $j, x$ pairs with this property is at least

\begin{align*}
\sum_{j=1}^{m} \sum_{i=1}^{\alpha} \left(|T_{i, j}'| - |S_j'|\right)
&= \sum_{j=1}^m \sum_{i=1}^{\alpha} |T_{i,j}'| - \sum_{i=1}^{\alpha} \sum_{j=1}^m |S_j'| \\
&= \sum_{j=1}^m |W_j'| - \sum_{i=1}^{\alpha} |Y \setminus U| \\
&= \sum_{x \in Y} |L_x| - \sum_{i=1}^{\alpha} |Y \setminus U| \\
&= (\alpha+B)|Y| - \sum_{i=1}^{\alpha} |Y \setminus U| \\
&> (\alpha+B)|Y| - \alpha|Y| \\
&= B|Y|
\end{align*}

Via the claim in Lemma \ref{lemma:main_struct}, each one of these color changes induces a corresponding arc in $\delta(\overline{Y}, Y)$. Thus for each $j, x$ pair such that $x \in Y \setminus S_j'$, $S_j' \cup \{x\} \in \mathcal{I}_1$, and $j \in L_x$, generate a unique arc $(y, x) \in \delta(\overline{Y}, Y)$ where $y$ has color $j$, and let $A' \subseteq \delta(\overline{Y}, Y)$ be the collection of these arcs. Then $|A'| > B|Y|$ and no two arcs in $A'$ share both the same target node $x$ and source node color $j$.
\end{proof}

For the same reason as before, Lemma \ref{lemma:list_main_struct} shows that the color-chordless subgraph $H$ contains an uncolored element $u$.

\begin{lemma}\label{lemma:list_final}
$c \Delta P$ is a list-feasible coloring of $i+1$ elements of $M = \bigcap_{i=1}^k M_i$ using $m$ colors.
\end{lemma}

\begin{proof}
$P$ is a color-chordless path in $G$ by Lemma \ref{lemma:list_H_paths}, so $c \Delta P$ is a feasible coloring of $i+1$ elements of $M_1$ by Lemma \ref{lemma:color-chordless_path}. $P$ is suffix-feasible with respect to partition matroids $M_2, M_3, \dots, M_k$ via Lemma \ref{lemma:list_main_struct} and Lemma \ref{lemma:suffix-feasible}, so $c \Delta P$ is a feasible coloring of $i+1$ elements of $M_2, M_3, \dots, M_k$. Lastly, an arc $(y, x) \in A'$ only if $y \in L_x$ or $c(y) \in L_x$. Thus $c \Delta P$ is a list-feasible coloring of $i+1$ elements of $M = \cap_{i=1}^k M_i$. 
\end{proof}

\end{document}